\newtheorem{thm}{{\cal T}heorem}
\newtheorem{cor}{Corollary}
\newtheorem{lem}{Lemma}
\newtheorem{example}{Example}
\def\Ddots{\mathinner{\mkern1mu\raise\p@
		\vbox{\kern7\p@\hbox{.}}\mkern2mu
		\raise4\p@\hbox{.}\mkern2mu\raise7\p@\hbox{.}\mkern1mu}}
\newcommand{\ndiv}{\hspace{-4pt}\not|\hspace{2pt}}
\begin{document}
	\title{An Improved  Multi-access Coded Caching with Uncoded Placement}
	\author{
		\IEEEauthorblockN{Shanuja Sasi and B. Sundar Rajan}
		
		\IEEEauthorblockA{Dept. of Electrical Communication Engineering, \\Indian Institute of Science, Bengaluru \\
			E-mail: shanuja@iisc.ac.in, bsrajan@iisc.ac.in}
	}	
	\maketitle 
\begin{abstract}
		We consider a variant of the well known coded caching problem, referred as multi-access coded caching problem, where each user has access to $z$ neighboring  caches in a cyclic wrap-around way. We present a placement and delivery scheme for this problem, under the restriction of uncoded placement. Our work is a generalization of one of the cases considered in ``Multi-access coded caching : gains beyond cache-redundancy'' by B. Serbetci, E. Parrinello and P. Elia. To be precise, when our scheme is specialized to $z=\frac{K-1}{K \gamma }$, for any $K \gamma$, where $K$ is the number of users and $\gamma $ is the normalized cache size, we show that our result coincides with their result.
		We show that for the cases considered in this work, our scheme outperforms the scheme proposed in ``Rate-memory trade-off for multi-access coded caching with uncoded placement'' by  K. S. Reddy and N. Karamchandani, except for some special cases considered in that paper. We also show that for $z= K-1$, our scheme achieves the optimal transmission rate.
\end{abstract}
\section{introduction}
\label{sec1}
	The proliferation of users and their demand for high data rate content lead to a drastic increase in the high traffic volume during peak periods. In the seminal paper \cite{maddahali2015flcc}, Maddah-Ali and Nieson proposed a coded caching scheme to relieve the traffic burden during peak hours by utilizing the cache memories available at the user ends. The proposed scheme tackles the under utilization of resources during off-peak hours in the placement phase by filling the cache memories during the off-peak period so as to avail multicasting opportunities, when the demands are revealed by the users, during the delivery phase. The setup consists of a central server having access to a database comprising of a set of $N$ files of equal size and a set of $K$ users where each user has  a normalized cache size  $\gamma =\frac{M}{N},$ where $M$ is the memory size of each cache in the units of files.	Each user reveals their demand, which is assumed to be a single file among the files held by the server, during the peak hours. Then, the server transmits coded symbols to all the users over an error-free link such that each user can retrieve the demanded file from the local cache content and the transmitted symbols. The overall objective is to design the placement as well as the delivery phases with minimum transmission rate such that the user demands are satisfied. A lot of research has been done and quite a bit of variants of this problem have been studied during the past few years \cite{tuninetti2016uncoded,avestimar2017tradeoff,jin2018placeOpti,vilardebo2018coded,nieson2017nonuniform,gunduz2018nonuniform,nieson2015decent,caire2016d2d} where it is assumed that each user has their own dedicated cache. 
	
	However in several scenarios such as in cellular networks users can have access to multiple caches when their coverage areas overlap. Considering this possibility, recently a couple of studies \cite{nihkil2017multiaccess,nikhil2020ratememory,parinello2019multiacessgains, Caire2020noveltransformation} have been done  where each user has access to some $z$ neighboring caches in a cyclic wrap around fashion referred to as multi-access coded caching problem. Each cache is also connected to $z$ users.  In \cite{nikhil2020ratememory}, the authors have proposed a scheme for uncoded cache placement, by mapping of the coded caching problem to the index coding problem. A lower bound on the optimal transmission rate  for multi-access coded caching with $z \geq \frac{K}{2}$ over all uncoded placement policies was also provided in \cite{nikhil2020ratememory}. Additionally, the exact transmission rate-memory trade-off was established for a few special cases, i.e.,	when $z=K-1,z=K-2 $, $z=K-3$ with $K$ even and $z=K-\frac{K}{g}+1$ for some positive integer $g$.
	
	For the multi-access coded caching problem, in \cite{parinello2019multiacessgains} the authors  have proposed a novel coded caching scheme that achieves a coding gain that exceeds $K \gamma $.  Two special cases are considered in \cite{parinello2019multiacessgains}, one is when $K \gamma =2$ and the other is when $z=\frac{K-1}{K \gamma }$, for any $K \gamma $. For both the cases, the proposed scheme can serve, on the average, more than $K\gamma $ users at a time. For the second special case, i.e., when  $z=\frac{K-1}{K\gamma }$, it was proved that the achieved gain is optimal under uncoded cache placement.

{\it Notations:}  The finite field with $q$ elements is denoted  by $\mathcal{F}_q.$ The notation $[n]$ represents the set $\{1,2, \ldots , n\}$, $[a,b]$ represents the set $\{ a, a+1, \ldots, b \}$ and $[a {\text{ mod } K},b {\text{ mod } K}]$ represents the set $\{ a {\text{ mod } K}, (a+1) {\text{ mod } K}, \ldots, b {\text{ mod } K} \}$. The bit wise exclusive OR (XOR) operation is denoted by $\oplus.$ The notation  $\lfloor x \rfloor$ denotes the largest integer smaller than or equal to $x$ and $\lceil x \rceil$ denotes the smallest integer greater than or equal to $x$. 
The notation $a|b$ implies $a$ divides $b$ and 
$a \ndiv b$ implies $a$ does not divide $b$, for some integers $a$ and $b$.	
\subsection{Background and Preliminaries}
	In this section, we formally define the multi-access coded caching problem considered in this work. The system model \cite{nihkil2017multiaccess,nikhil2020ratememory}, as illustrated in Fig. \ref{model}, consists of a network comprising of a central server  storing $N$ files, $W^0,W^1,W^2,\ldots,W^{N-1}$, each of size $1$ unit, $K$ users, $U_0,U_1, \ldots,U_{K-1}$, and $K$ caches, $C_0,C_1, \ldots,C_{K-1}$, such that
	
	\begin{itemize}
		\item each user is connected to $z$ neighboring caches in a cyclic wrap-around fashion, and has access to the data stored in those caches, i.e., each user $U_{\alpha}, \alpha \in \{0,1,\ldots,K-1\}$, has access to all the caches in the set $\mathcal{C}_\alpha$, where  $\mathcal{C}_\alpha= \{C_{j \text{ mod } K}: j \in \{\alpha,\alpha+1,\ldots, \alpha+z-1\}\}.$
		\item each cache has a memory size of $M =N \gamma $ files, where $\gamma \in \{\frac{k}{K}, k=1,2,\cdots, K\}$ is the normalized cache size, 
		\item each user demands one among the $N$ files, and
		\item  all the $K$ users are connected via an error-free shared link to the server.
	\end{itemize}
%
	The system works in two phases- a placement phase, and	a delivery phase. In the placement phase the caches are filled with the content of the files from the servers' database prior to the users' requests. In the delivery phase, each user $U_{\alpha}$ demands a file from the database. The index of the file demanded by the user $U_{\alpha}$ is denoted by $d(\alpha)$. We denote ${\bf d} = (d(0) , d(1) , \ldots, d(K-1) )$ as the demand vector. When the demands are revealed by each user, the server transmits coded symbols depending upon the demand vector and cache content at each user. With the help of the server transmissions and the accessible cache content, each user $U_{\alpha}$ decodes the desired file $W^{d(\alpha)}$.  The multi-access coded caching problem is to develop placement and delivery schemes so as to minimize the transmission rate. 
\begin{figure*}[!t]
		\centering
		\includegraphics[width=24pc]{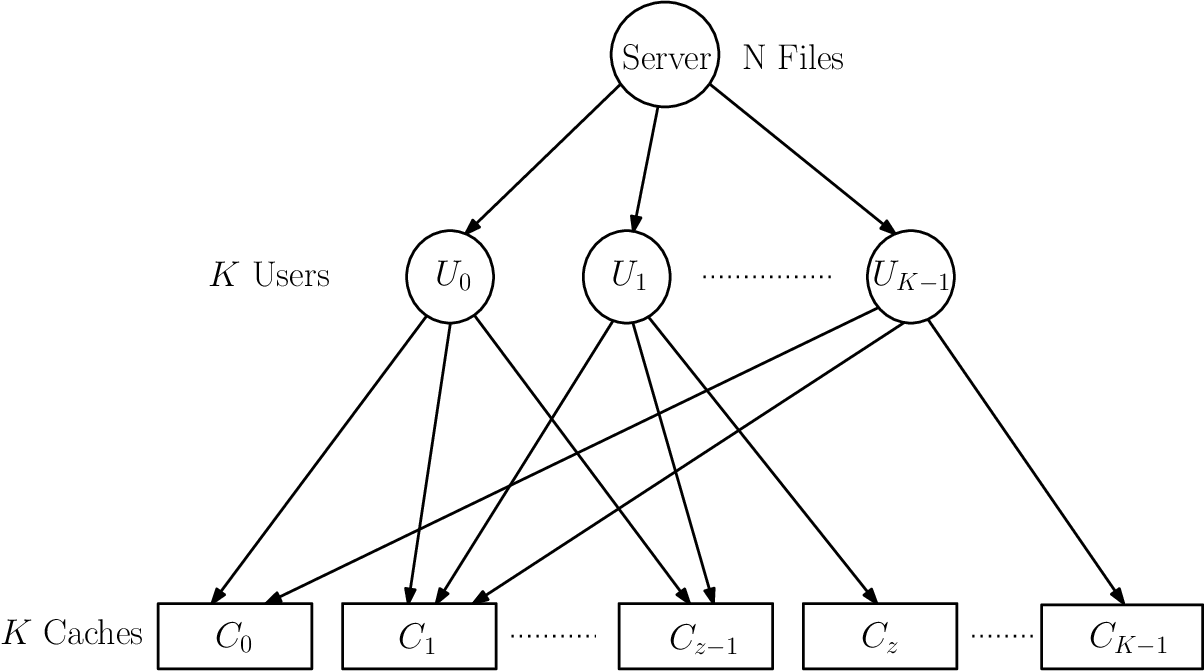}
		\caption{Multi-access Coded Caching Network \cite{nihkil2017multiaccess,nikhil2020ratememory} consisting of a central server, $K$ users, and $K$ caches where each user is connected to $z$ neighboring caches.}
		\label{model}
\end{figure*}
\subsection{Previous Results}
	The multi-access coded caching problem was introduced in \cite{nihkil2017multiaccess} where  the authors have proposed a coloring based achievable scheme. A new transmission rate $\mathcal{R}_{RK} $ was derived for multi-access coded caching problem with any $z>1$ in \cite{nikhil2020ratememory} which is better than that in \cite{nihkil2017multiaccess}.
{\small
\[
\mathcal{R}_{RK}(\gamma) = 
\left\{
\begin{array}{ll}
  K\left (1-z\gamma \right )^2, & \forall \gamma  \in \left \{0,\frac{1}{K},\frac{2}{K},\ldots, \left \lfloor \frac{K}{z} \right \rfloor \frac{1}{K} \right \} \\
   0, &  \textit{for     } \gamma = \left \lceil \frac{K}{z} \right \rceil \frac{1}{K}
\end{array}
\right.
\]}
The lower convex envelope of the above mentioned points is also achievable through memory sharing. A lower bound on the optimal transmission rate-memory trade-off for any multi-access coded caching problem under the restriction of uncoded placement with $z \geq \frac{K}{2}$ was derived in \cite{nikhil2020ratememory}. 
 The authors have also considered some special cases for $z \geq \frac{K}{2}$, namely when $z=K-1, z=K-2, z=K-3$ when $K $ is even, and
 $z=K-\frac{K}{g}+1$ for some positive integer $g$,
for which an achievable scheme was provided separately, which achieves the optimal transmission rate.

 In \cite{parinello2019multiacessgains}, the authors have considered two special cases, one is when $K\gamma =2$, and the other is when $z=\frac{K-1}{K\gamma }$ for any $K\gamma $. For $K\gamma =2$, a new scheme was proposed under certain conditions, which is better than the previous results. For the other special case considered in \cite{parinello2019multiacessgains}, i.e., when $z=\frac{K-1}{K\gamma }$ for any $K\gamma $, it was proved that the achieved transmission rate $\frac{1}{K}$ is optimal under uncoded cache placement.

%
\subsection{Our Contributions}
 The contributions of this paper are summarized as follows.
\begin{itemize}
	\item  We provide an achievable scheme for multi-access coded caching problem with each cache having a normalized capacity of $ \gamma$, where $\gamma \in \left \{\frac{k}{K}: \gcd(k,K)=1,  k\in \left [1,K \right ]\right \}$, under the restriction of uncoded placement.  
	\item 	Our work is a generalization of one of the cases considered in \cite{parinello2019multiacessgains}. To be precise, when our scheme is specialized to $z=\frac{K-1}{K\gamma }$, for any $K\gamma $, we show that our result coincides with that in Theorem $2$ in \cite{parinello2019multiacessgains}.
	\item 
	For the special cases, $z \geq \frac{K}{2}$, when $z=K-2, z=K-3$ when $K $ is even, and $z=K-\frac{K}{g}+1$ for some positive integer $g$, the scheme proposed in \cite{nikhil2020ratememory}, performs better than ours. The authors have provided separate optimal schemes for those special cases. For all other cases considered in our work, we show that our scheme outperforms the scheme proposed in \cite{nikhil2020ratememory}. We also show that for $z= K-1$, our scheme achieves the optimal transmission rate as in \cite{nikhil2020ratememory}.

\item In \cite{Caire2020noveltransformation} the authors have used a \textit{novel transformation} approach to multi-access caching schemes. 

It is proved that the rate achieved by this approach is less than the rate achieved using our scheme when $z \leq \left ( 1-\frac{2}{k}\right )\frac{K}{k}.$ We prove that our scheme performs better than the scheme proposed in  \cite{Caire2020noveltransformation} when $z \geq \left ( 1-\frac{1}{k}\right )\frac{K}{k}.$ 

\end{itemize}

The paper is organized as follows. Section \ref{sec2} provides the main result of this paper, i.e.,  Theorem \ref{thm1} presents the achievable transmission rate. We compare our result with the previous works in the same section. Our proposed scheme is described in Section \ref{sec3} which achieves the transmission rate presented in Theorem \ref{thm1}. Section \ref{sec4} concludes our paper. Proof of correctness of the delivery scheme is given in Appendix $A.$

\section{Main Result}
\label{sec2}	
We discuss our main result stated as Theorem \ref{thm1} in this section. 
\begin{thm} 
\label{thm1}
Consider a  multi-access coded caching scenario with $N$ files, and $K$ users, each having access to $z $ neighboring caches in a cyclic wrap-around way, with each cache having a normalized capacity of $ \gamma$, where $\gamma \in \left \{\frac{k}{K}: \gcd(k,K)=1,  k\in \left [1,K \right ]\right \}$. The following transmission rate $\mathcal{R}_{new}(\gamma)$ is achievable, for $z \in \left [2, \left \lceil \frac{K}{k} \right \rceil -1 \right ]$.
\begin{itemize}
\item  If  $(K- k z)=1,$  then $\mathcal{R}_{new}(\gamma) = \frac{1}{K}$.
\item If $(K- k z)$ is even, then 
$\mathcal{R}_{new}(\gamma) = 2 \sum_{r=\frac{K- k z}{2}+1}^{K- k z}\frac{1}{1+\left \lceil \frac{k z}{r} \right \rceil}.$
\item If $(K- kz)>1$ is odd, then 
{\small
$\mathcal{R}_{new}(\gamma) = \frac{1}{ \left (\left \lceil  \frac{2k z}{K-k z+1} \right \rceil+1 \right )}+\sum_{r=\frac{K- k z+3}{2}}^{K- k z}\frac{2}{1+\left \lceil \frac{k  z}{r} \right \rceil}. $}
\end{itemize}
\end{thm}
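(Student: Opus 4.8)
The plan is to give an explicit placement-and-delivery scheme and then prove that it achieves the rate $\mathcal{R}_{new}(\gamma)$ stated in Theorem \ref{thm1}. For the placement phase, since $\gamma = k/K$, I would split each file $W^n$ into $K$ equal subfiles $W^n_0, W^n_1, \ldots, W^n_{K-1}$ and store subfile $W^n_j$ in cache $j$; thus cache $j$ holds $\{W^n_j : n \in [0,N-1]\}$, which uses exactly $N\gamma$ file-units since... wait, that is only memory $N/K$, so instead I would store in cache $j$ the subfiles indexed by a window of $k$ consecutive indices, say $W^n_j$ for all $n$ and for $j$ in some structured family, so that the set of caches missing a given subfile forms a contiguous arc of length $K - kz$ after accounting for the $z$ caches each user sees. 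The key combinatorial point is that, because each user $U_\alpha$ accesses the $z$ caches $\{\alpha, \ldots, \alpha + z - 1\}$, a cyclic-interval placement makes the set of subfiles of $W^{d(\alpha)}$ that $U_\alpha$ still needs correspond to a cyclic interval of length $K - kz$; when $K - kz \le 0$ every user has everything and the rate is $0$, matching the last bullet.

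For the delivery phase I would group the $K - kz$ missing subfiles of each user according to how many other users share a "clique" structure: for a chosen parameter $r$ ranging over $\lceil (K-kz+1)/2\rceil$ or $(K-kz)/2 + 1$ up to $K - kz$, one forms transmissions that are XORs over a set of users whose needed intervals overlap appropriately, so that each such transmission of a single subfile-sized symbol simultaneously delivers one missing piece to $1 + \lceil kz/r\rceil$ users. Counting: for each value of $r$ there are a controlled number of such coded symbols, and summing $1/(1 + \lceil kz/r\rceil)$ over the relevant range of $r$ (with the factor $2$ coming from a symmetry that pairs up the interval on the two sides) yields exactly the three displayed formulas — the even case being the clean sum, the odd case needing the separate leading term $1/(\lceil 2kz/(K-kz+1)\rceil + 1)$ to handle the middle value of $r$ that cannot be paired, and the case $K - kz = 1$ collapsing to a single transmission of rate $1/K$. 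I would verify decodability by checking that in each coded symbol, every participating user has cached (via its $z$ caches) all but its own demanded subfile among those XORed together — this is the content of the correctness proof deferred to Appendix~A.

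The final sentence, that the lower convex envelope of these points is achievable by memory sharing, is then a standard argument: given two achievable pairs $(\gamma_1, \mathcal{R}_{new}(\gamma_1))$ and $(\gamma_2, \mathcal{R}_{new}(\gamma_2))$ and any $\lambda \in [0,1]$, split each file into two parts of sizes $\lambda$ and $1-\lambda$, run the scheme for $\gamma_1$ on the first part and the scheme for $\gamma_2$ on the second part using disjoint portions of each cache; the total normalized cache used is $\lambda \gamma_1 + (1-\lambda)\gamma_2$ and the total rate is $\lambda \mathcal{R}_{new}(\gamma_1) + (1-\lambda)\mathcal{R}_{new}(\gamma_2)$, which traces out the chord between the two points and hence, ranging over all pairs, the entire lower convex envelope.

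The main obstacle I anticipate is not the memory-sharing step (routine) but constructing the delivery scheme so that the clique sizes come out to be exactly $1 + \lceil kz/r \rceil$ and the number of transmissions at each "level" $r$ is exactly right to produce the stated closed-form sums — in particular handling the parity split cleanly, since when $K - kz$ is odd the central interval length has no partner and forces the isolated term with the somewhat unusual argument $\lceil 2kz/(K - kz + 1)\rceil$. Getting the cyclic-wraparound bookkeeping correct (which users' needed intervals overlap, and ensuring no subfile is requested by a user who lacks the side information to decode) is where the real work lies, and that is what I would carry out in detail in Section~\ref{sec3} with the formal correctness verification in Appendix~A.
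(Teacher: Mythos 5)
Your outline follows the same broad path as the paper (cyclic-interval placement of $k$ consecutive sub-file indices per cache, a delivery phase indexed by a parameter $r$ running from roughly $\frac{K-kz}{2}+1$ to $K-kz$, a coding gain of $1+\lceil kz/r\rceil$ per transmission, a special unpaired middle term when $K-kz$ is odd, and routine memory sharing), but as written it has a genuine gap: the delivery scheme is never actually constructed, and the construction is where the theorem lives. You assert that for each $r$ one can form XORs in which ``every participating user has cached all but its own demanded subfile,'' i.e.\ a clean clique on whole sub-files of size $1+\lceil kz/r\rceil$. When $r \nmid kz$ such cliques of whole sub-files do not exist in this placement; this is exactly why the paper introduces a second layer of sub-packetization, splitting every sub-file into $p$ parts (or $\frac{p}{2}$ parts when $p=1+\lceil kz/r\rceil$ is even) and transmitting the part-sized symbols $T_j^{r-t}$ (or the pair $T_{j,1}^{r-t},T_{j,2}^{r-t}$ when $p$ is odd) of Algorithms~1 and~2. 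Without this splitting the per-$r$ cost $\frac{2}{1+\lceil kz/r\rceil}$ cannot be realized, and your coded symbols are sub-file--sized rather than part-sized, so the counting that is supposed to yield the displayed sums is not actually carried out.

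Relatedly, the decodability claim is deferred rather than proved: the paper's Appendix~A (Lemmas~1--5) does the nontrivial index arithmetic showing that, for every $r$, every $l$, and every user $U_\alpha$, all interfering parts in the relevant XOR have sub-file indices lying in the window $[k\alpha,\,k(\alpha+z)-1]$ accessible to $U_\alpha$, using bounds such as $(p-2)r < kz$; it also identifies which two sub-files $W^{d(\alpha)}_{k\alpha-r}$ and $W^{d(\alpha)}_{k(\alpha+z)+r-1}$ each value of $r$ delivers, which is what makes the pairing (and hence the factor $2$ and the lone middle term $\frac{1}{\lceil 2kz/(K-kz+1)\rceil+1}$) work. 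Your memory-sharing paragraph and the $K-kz\le 0$ case are fine, but acknowledging that ``the real work lies'' in the bookkeeping and then not doing it means the stated rates are not established by your argument; to close the gap you would need to specify the part-splitting, the explicit coded symbols for each $j$ and $r$, and the cache-availability verification.
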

Note that if $z \geq \left \lceil \frac{K}{k} \right \rceil $, then all the users can access all the sub-files of each file and the transmission rate is zero. So, in Theorem \ref{thm1}, we consider only the cases where the transmission rate is non-zero.
The placement scheme and the delivery algorithm achieving the rate claimed in Theorem \ref{thm1} is given in Section \ref{sec3}.
\subsection{Comparison of our scheme with the scheme in \cite{nikhil2020ratememory}}
Theorem \ref{thm1} is illustrated in Fig. \ref{fig2} for $K =25$ and it is compared with the scheme proposed in \cite{nikhil2020ratememory}. In Fig. \ref{fig2}, transmission rate vs $\gamma$  plot is obtained for each $z \in \left [2, \left \lceil \frac{1}{\gamma} \right \rceil -1 \right ]$, as $\gamma $ varies from $\frac{1}{25} $ to $1$. It can be observed from the plot that our scheme performs better than that in \cite{nikhil2020ratememory} except for $z=K-2$. For $z=K-2=23$ $ (\frac{z}{K}=\frac{23}{25})$, the scheme in \cite{nikhil2020ratememory} outperforms ours, since the case of $z=K-2$ was considered separately in \cite{nikhil2020ratememory} and an optimal scheme for that particular case was provided in \cite{nikhil2020ratememory}. For each $z$, the gap between the two curves is large for smaller $\gamma$. As $\gamma $ increases, the gap reduces and both the curves coincide eventually.

Now, we examine the case when $k=1$. A plot for $K =11$ and $\gamma =\frac{1}{11}$ is shown in Fig. 3(a). In this plot, it can be seen that our scheme is better than the scheme in \cite{nikhil2020ratememory} except for one point in Fig. 3(a). The gap between the transmission rate of our scheme and that in \cite{nikhil2020ratememory} is more for smaller $z$. As $z$ increases, the gap reduces and gradually both the curves coincide. The only point where the scheme in \cite{nikhil2020ratememory} is better that ours is when $z=K-2=9$ in Fig. 3(a). Like it was discussed before, this is since the case of $z=K-2$ was considered separately in \cite{nikhil2020ratememory} and an optimal scheme for that particular case was provided in \cite{nikhil2020ratememory}.

In general, for all the points mentioned in Theorem \ref{thm1}, our scheme outperforms the scheme proposed in \cite{nikhil2020ratememory} except for some special cases discussed in \cite{nikhil2020ratememory} which achieves the optimal transmission rate. This is shown in Theorem \ref{thm2}. And also, when $z=K-1$, our scheme achieves the optimal transmission rate which coincides with the result in \cite{nikhil2020ratememory}. This particular case is discussed in Corollary \ref{cor2}. 
\begin{figure*}[!t]
	\centering
	\includegraphics[scale=0.3]{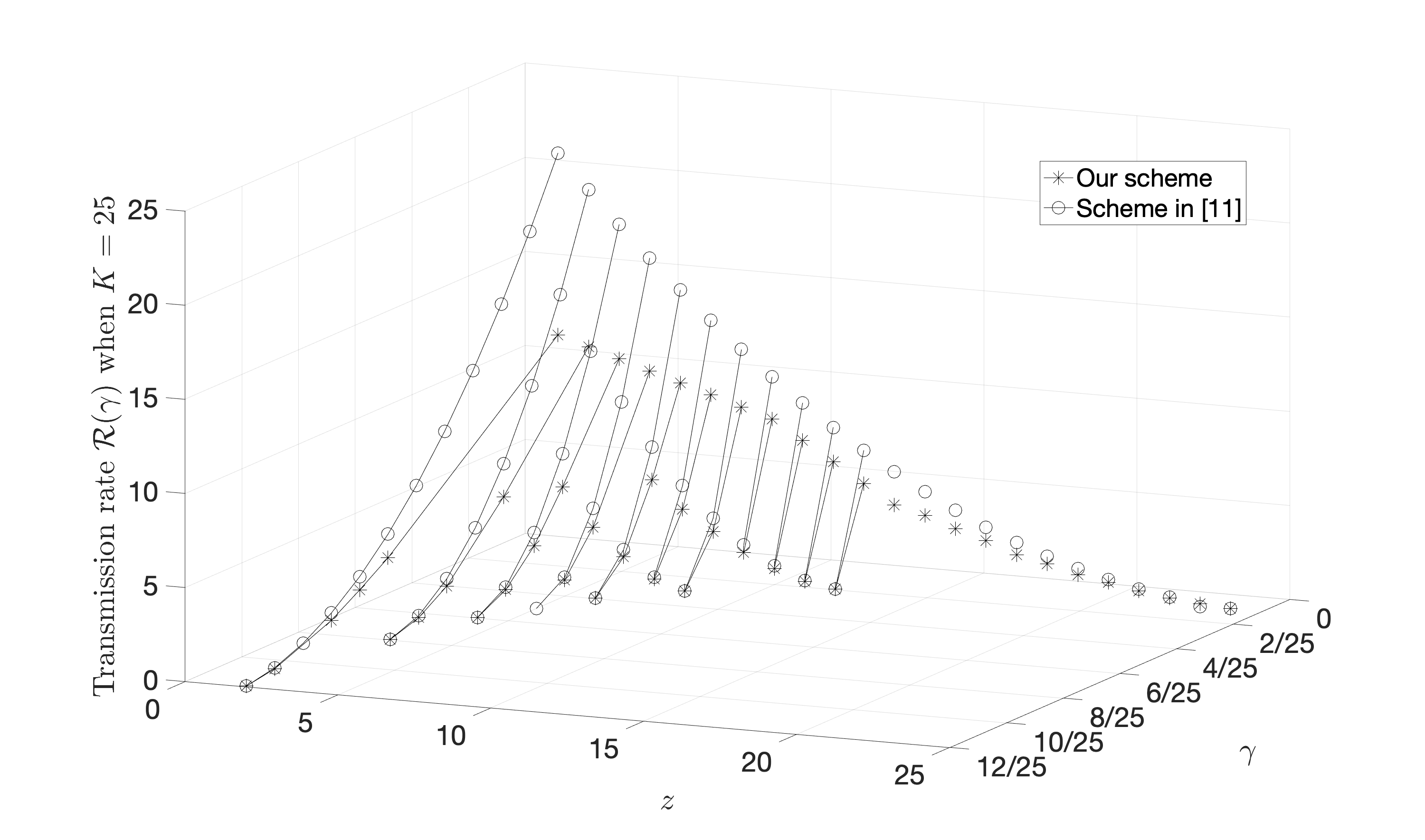}
	\caption{Transmission rate vs $\gamma$ vs $z$ plot when $K =25$, as in Theorem \ref{thm1}.}
	\label{fig2}
\end{figure*}
\begin{figure*}
	\begin{subfigure}{0.49\textwidth}
		\centering
		\includegraphics[scale=0.17]{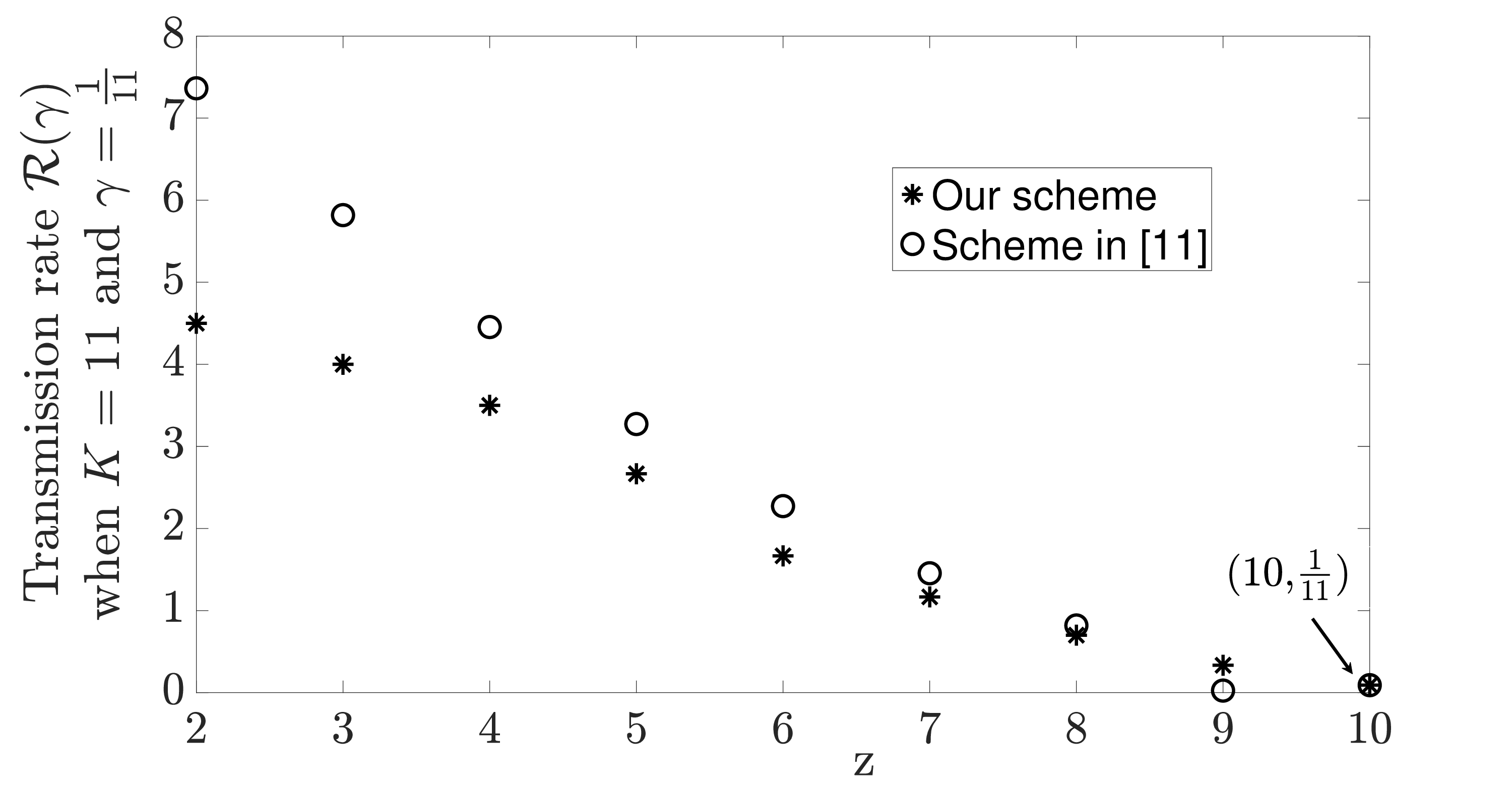}
		\caption{Transmission rate vs $z$ when $k =1$, for $K=11$ as in Theorem \ref{thm1}.}
		\label{fig3}
	\end{subfigure}
\begin{subfigure}{0.49\textwidth}
	\centering
	\includegraphics[scale=0.16]{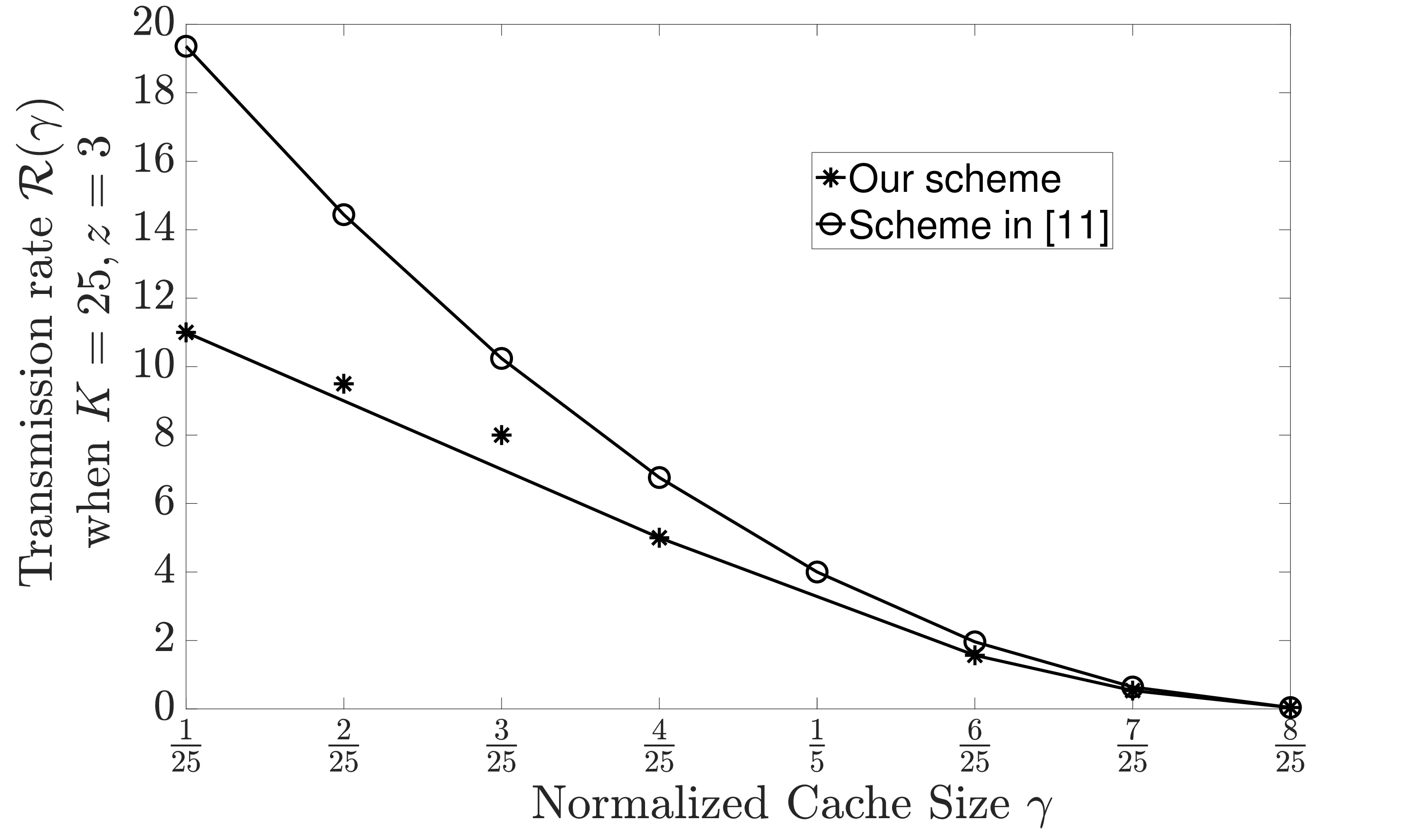}
	\caption{Transmission rate vs $\gamma$ plot when $K =25$ and $z=3$ for all the points mentioned in Theorem \ref{thm1}.}
	\label{fig5}
\end{subfigure}
	\caption{ Transmission rate vs $z$ plot and transmission rate vs $\gamma$ plot  as in Theorem \ref{thm1}.}
\end{figure*}
\begin{thm}
\label{thm2}
For any $\gamma \in \left \{\frac{k}{K}: \gcd(k,K)=1,  k\in \left [1,K \right ]\right \}$, we have  $\mathcal{R}_{new}(\gamma) \leq \mathcal{R}_{RK}(\gamma).$
\end{thm}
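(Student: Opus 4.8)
The plan is to compare $\mathcal{R}_{new}(\gamma)$ with $\mathcal{R}_{ic}(\gamma) = K(1-z\gamma)^2 = K\left(1 - \frac{kz}{K}\right)^2 = \frac{(K-kz)^2}{K}$ (writing $\gamma = \frac{k}{K}$), case by case according to the four branches in the statement of Theorem \ref{thm1}. When $K - kz \leq 0$ both rates are zero (or $\mathcal{R}_{ic}$ is evaluated at the next breakpoint where it vanishes), and when $K - kz = 1$ we have $\mathcal{R}_{new} = \frac{1}{K} = \frac{(K-kz)^2}{K} = \mathcal{R}_{ic}$, so equality holds. Thus the real content is in the two remaining branches, where $m := K - kz \geq 2$.

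For the even case, I would show $2\sum_{r=m/2+1}^{m} \frac{1}{1 + \lceil kz/r \rceil} \leq \frac{m^2}{K}$. The key estimate is a lower bound on each denominator: since $1 + \lceil kz/r \rceil \geq 1 + kz/r = \frac{r + kz}{r} = \frac{r + K - m}{r}$, each summand is at most $\frac{r}{r + K - m}$. So it suffices to prove $\sum_{r=m/2+1}^{m} \frac{r}{r+K-m} \leq \frac{m^2}{2K}$. I would bound $\frac{r}{r+K-m} \leq \frac{m}{m + K - m} = \frac{m}{K}$ for every $r$ in the range (the function is increasing in $r$ and $r \leq m$), which gives the sum at most $\frac{m}{2}\cdot\frac{m}{K} = \frac{m^2}{2K}$, as desired. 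The odd case with $m > 1$ is handled the same way: the stray term $\frac{1}{\lceil 2kz/(m+1)\rceil + 1}$ should be bounded using $\lceil 2kz/(m+1) \rceil \geq 2kz/(m+1)$, giving $\frac{m+1}{2kz + m + 1} = \frac{m+1}{2(K-m) + m + 1} = \frac{m+1}{2K - m + 1}$, and the remaining sum $\sum_{r=(m+3)/2}^{m} \frac{2}{1+\lceil kz/r\rceil}$ is bounded exactly as before by $\frac{r}{r+K-m} \leq \frac{m}{K}$ termwise; one then checks the arithmetic closes, i.e. $\frac{m+1}{2K-m+1} + (m-1)\cdot\frac{m}{K} \leq \frac{m^2}{K}$, which reduces to $\frac{m+1}{2K-m+1} \leq \frac{m}{K}$, true since $K(m+1) \leq m(2K - m + 1) \iff m^2 + (1-K)m - K \leq 0$ wait — I would instead note $K(m+1) = Km + K$ and $m(2K-m+1) = 2Km - m^2 + m$, so the inequality is $Km + K \leq 2Km - m^2 + m$, i.e. $m^2 + K \leq Km + m = m(K+1)$, which holds whenever $2 \leq m \leq K-1$ since then $m(K+1) - m^2 - K = (m-1)(K - m) \cdot \frac{?}{}$ — concretely $m(K+1) - m^2 - K = -(m-1)(m - ? )$; I would just verify $m^2 - m(K+1) + K = (m-1)(m-K) \leq 0$ directly, which holds for $1 \leq m \leq K$.

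The main obstacle I anticipate is not any single inequality but keeping the ceiling functions under control while still getting a bound tight enough — in particular making sure the crude bound $\frac{r}{r+K-m} \le \frac{m}{K}$ does not lose so much that it fails to beat $\mathcal{R}_{ic}$; the summation range having exactly $m/2$ (resp. $(m-1)/2$) terms is what makes the telescoping count work out to $\frac{m^2}{2K}$, so I would double-check the index bounds in each parity case. A secondary subtlety is the boundary behavior of $\mathcal{R}_{ic}$ near $\gamma = \lceil K/z\rceil/K$, where $\mathcal{R}_{ic}$ is defined to be $0$ while $K - kz$ may still be positive at the previous lattice point; there I would simply invoke that $\mathcal{R}_{new}$ is also $0$ once $K - kz \le 0$ and rely on memory sharing (the lower convex envelope) to cover intermediate $\gamma$, so the pointwise inequality at lattice points suffices. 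Finally, since both schemes' full rate-memory curves are the lower convex envelopes of their respective lattice points, the pointwise inequality $\mathcal{R}_{new} \le \mathcal{R}_{ic}$ at every $\gamma \in \{k/K\}$ upgrades automatically to the inequality between the two envelopes.
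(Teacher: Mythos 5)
Your proposal is correct and follows essentially the same route as the paper: the same case split on $K-kz$, the same key estimate that each summand satisfies $\frac{1}{1+\lceil kz/r\rceil}\leq \frac{K-kz}{K}$ (via $\lceil x\rceil\geq x$ and $r\leq K-kz$), and the same count of $\frac{K-kz}{2}$ resp. $\frac{K-kz-1}{2}$ terms. The only cosmetic difference is your handling of the lone $r=\frac{K-kz+1}{2}$ term in the odd case through the intermediate bound $\frac{m+1}{2K-m+1}$, which reduces to the paper's bound $\frac{m}{K}$ via $(m-1)(m-K)\leq 0$.
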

\begin{proof}
	For any $\gamma$, we have $\mathcal{R}_{RK}\left (\gamma \right ) =K\left (1-\frac{zk }{K}\right )^{2}=  	\frac{(K-k z)^2}{K}$.  

{\it Case (i):} When $K-kz=1$, we have  $\mathcal{R}_{RK}(\gamma ) =\frac{(K-k z)^2}{K} =\frac{1}{K} =\mathcal{R}_{new}(\gamma).$

{\it Case (ii):}  When $K-k z$ is even, for any $i  \in  \{1,2,\ldots,\frac{K-k z}{2}\}$, we have 
{\small
\begin{align*}
	\frac{k z}{K-k z} \leq \frac{k z}{\frac{K-k z}{2}+i} &\leq \left \lceil \frac{k z}{\frac{K-k z}{2}+i} \right \rceil \\
	\Rightarrow 1+\frac{k z}{K-k z} &\leq 1+ \left \lceil \frac{k z}{\frac{K-k z}{2}+i} \right \rceil. 
	\end{align*}}
For any $r \in \left [\frac{K-k z}{2}+1,K-k z \right ]$, 
{\small
\begin{align*}
 \frac{K}{K-k z} &\leq 1+ \left \lceil \frac{k z}{r} \right \rceil   \\
	\Rightarrow \frac{K-k z}{K} &\geq \frac{1}{1+ \left \lceil \frac{k z}{r} \right \rceil} \\
	\Rightarrow \sum_{r=\frac{K- k z}{2}+1}^{K-k z} \frac{K-k z}{K}  &\geq \sum_{r=\frac{K- k z}{2}+1}^{K-k z}   \frac{1}{1+ \left \lceil \frac{k z}{r} \right \rceil}\\
	\Rightarrow \left ( \frac{K- k z}{2} \right ) \left ( \frac{K-k z}{K} \right ) & \geq \sum_{r=\frac{K- k z}{2}+1}^{K-k z}   \frac{1}{1+ \left \lceil \frac{k z}{r} \right \rceil} \\
	\Rightarrow   \frac{(K-k z)^2}{K}  &\geq 2\sum_{r=\frac{K- k z}{2}+1}^{K-k z}   \frac{1}{1+ \left \lceil \frac{k z}{r} \right \rceil} 
	\end{align*}	}
	Hence, if $K-k z$ is even, then $ \mathcal{R}_{new}\left (\gamma \right ) \leq \mathcal{R}_{RK}\left (\gamma \right )$.

{\it Case (iii):} When $K-k z >1$ is odd, for any $i  \in  \{1,2,\ldots,\frac{K-k z+1}{2}\}$, we have 
{\small
\begin{align*}
	\frac{k z}{K-k z} \leq \frac{k z}{\frac{K-k z-1}{2}+i} &\leq \left \lceil \frac{k z}{\frac{K-k z-1}{2}+i} \right \rceil \\
	\Rightarrow 1+\frac{k z}{K-k z} &\leq 1+ \left \lceil \frac{k z}{\frac{K-k z-1}{2}+i} \right \rceil \\ 
	\Rightarrow \frac{K}{K-k z} &\leq 1+ \left \lceil \frac{k z}{\frac{K-k z-1}{2}+i} \right \rceil.   
\end{align*}}
For any $r \in \left [\frac{K-k z+1}{2},K-k z   \right ]$, we have
{\small
\begin{align} \label{eq: 1}
 \frac{K}{K-k z} \leq 1+ \left \lceil \frac{k z}{r} \right \rceil   
	\Rightarrow \frac{K-k z}{K} \geq \frac{1}{1+ \left \lceil \frac{k z}{r} \right \rceil } .
\end{align}}
Taking sum over all $r \in \left \{\frac{K-k z+1}{2}+1,\frac{K-k z+1}{2}+2,\ldots,K-k z \right \}$, we get 
	{\small
	\begin{align}
	\sum_{r=\frac{K- k z+1}{2}+1}^{K-k z} \frac{K-k z}{K}   &\geq \sum_{r=\frac{K- k z+1}{2}+1}^{K-k z}   \frac{1}{1+ \left \lceil \frac{k z}{r} \right \rceil} \\
	\Rightarrow \left ( \frac{K- k z-1}{2} \right ) \left ( \frac{K-k z}{K}  \right )& \geq \sum_{r=\frac{K- k z+1}{2}+1}^{K-k z}   \frac{1}{1+ \left \lceil \frac{k z}{r} \right \rceil}\\
	\Rightarrow   \frac{(K-k z-1)  (K-k z)}{K} &  \geq 2\sum_{r=\frac{K- k z+1}{2}+1}^{K-k z}   \frac{1}{1+ \left \lceil \frac{k z}{r} \right \rceil}. \label{eq: 2}
\end{align}}

	Now considering the inequality (\ref{eq: 1}) when $r = \frac{K-k z+1}{2},$ we get 
	{\small
\begin{align} \label{eq: 3}
	\frac{K-k z}{K} &  \geq   \frac{1}{1+ \left \lceil \frac{k }{\frac{K-k  z+1}{2}} \right \rceil}.
\end{align}}
	Summing up the inequalities $(\ref{eq: 2})$ and $(\ref{eq: 3})$, we get,
	{\small
\begin{align*}
	\frac{K-k z}{K}  + \frac{(K-k z-1) (K-k z)}{K}  \geq\\  \frac{1}{1+ \left \lceil \frac{k z}{\frac{K-k  z+1}{2}} \right \rceil}+2\sum_{r=\frac{K- k z+1}{2}+1}^{K-k z}   \frac{1}{1+ \left \lceil \frac{k z}{r} \right \rceil}. 
\end{align*}}
	Hence,
\begin{align*}
	\frac{(K-k z)^2}{K}   &\geq  \frac{1}{1+ \left \lceil \frac{k z}{\frac{K-k  z+1}{2}} \right \rceil}+2\sum_{r=\frac{K- z+1}{2}+1}^{K-z}   \frac{1}{1+ \left \lceil \frac{z}{r} \right \rceil} \\
	\Rightarrow \mathcal{R}_{RK}\left (\gamma \right ) & \geq  \mathcal{R}_{new}\left (\gamma \right ).
\end{align*}	
\end{proof}
\begin{cor}
	\label{cor2}
	For $z=K-1$, we have $\mathcal{R}^{*}(\gamma) = \mathcal{R}_{new}(\gamma)$, where $R^{*}(\gamma)$ is the optimal rate of dedicated-cache coded caching setting
	where each cache has an augmented size equal to $z\gamma.$
\end{cor}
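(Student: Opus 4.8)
The plan is to evaluate both sides of the claimed identity at the finitely many corner points $\gamma=k/K$, observe that they agree, and then extend to all $\gamma$ by memory sharing. The key simplification is that for $z=K-1$ the multi-access memory parameter $kz$ equals $K-1$ exactly when $k=1$, and exceeds $K$ for every $k\ge 2$.

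\textbf{Evaluating $\mathcal{R}_{new}$.} Substituting $z=K-1$ into $K-kz$: for $k=1$ one gets $K-kz=K-(K-1)=1$, so the first bullet of Theorem~\ref{thm1} gives $\mathcal{R}_{new}(1/K)=\tfrac1K$; for every $k\ge 2$ one gets $K-kz=K-k(K-1)\le 2-K\le 0$, so the last bullet gives $\mathcal{R}_{new}(k/K)=0$. Hence, after taking the lower convex envelope, $\gamma\mapsto\mathcal{R}_{new}(\gamma)$ for $z=K-1$ is the polyline through $(0,K)$, $(\tfrac1K,\tfrac1K)$, $(\tfrac2K,0)$, and is $0$ for $\gamma\ge\tfrac2K$.

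\textbf{Evaluating $\mathcal{R}^{*}$ and concluding.} In the dedicated-cache problem associated with memory $\gamma$, each of the $K$ users owns a private cache of normalized size $z\gamma=(K-1)\gamma$, so its replication parameter is $t:=Kz\gamma=k(K-1)$. At $\gamma=1/K$ this equals $t=K-1\in\mathbb{Z}$, and since there are $N\ge K$ files the Maddah-Ali--Niesen scheme \cite{maddahali2015flcc} achieves rate $\frac{K-t}{t+1}=\frac1K$, which is the optimal rate at this memory point (optimality under uncoded placement as in \cite{avestimar2017tradeoff}, used for $z=K-1$ in \cite{nikhil2020ratememory}); thus $\mathcal{R}^{*}(1/K)=\tfrac1K$. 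For $\gamma\ge 2/K$ the augmented normalized size is $(K-1)\gamma\ge 2(K-1)/K\ge 1$, so every cache already stores the whole library and $\mathcal{R}^{*}(\gamma)=0$. Since $\mathcal{R}^{*}$ is likewise piecewise linear between consecutive corner points, matching the values at $\gamma=0,\tfrac1K,\tfrac2K$ and for $\gamma\ge\tfrac2K$ gives $\mathcal{R}_{new}(\gamma)=\mathcal{R}^{*}(\gamma)$ for all $\gamma$.

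\textbf{Main obstacle.} The point needing the most care is the single corner $\gamma=1/K$: one must verify that the placement and delivery of Section~\ref{sec3}, specialized to $z=K-1$, indeed delivers rate exactly $\tfrac1K$ --- equivalently, that the coding gain $1+\lceil kz/r\rceil$ of Theorem~\ref{thm1} collapses to the single value $K$ when $kz=K-1$ --- and one must fix the intended meaning of ``optimal rate'' for $\mathcal{R}^{*}$ so that the matching converse can be invoked. Everything else reduces to the elementary arithmetic above.
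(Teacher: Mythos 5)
Your evaluation at the memory points $\gamma=k/K$ is correct and is essentially the paper's own proof: for $k=1$ one has $K-kz=1$, so $\mathcal{R}_{new}(1/K)=\frac{1}{K}$, which matches the dedicated-cache optimum with augmented cache $z\gamma=\frac{K-1}{K}$ (the point $t=K-1$ of the Maddah-Ali--Niesen tradeoff, rate $\frac{K-t}{t+1}=\frac{1}{K}$, tight against the cut-set bound $1-M/N$), and for $k\ge 2$ both quantities are $0$ because the accessible (respectively augmented) memory already covers the whole library. The paper argues exactly this, only more tersely; your explicit verification that $\mathcal{R}^{*}(1/K)=\frac{1}{K}$ is a detail the paper leaves implicit.

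However, your final step --- extending the equality to every $\gamma$ by asserting that $\mathcal{R}^{*}$ is piecewise linear between the consecutive points $\gamma=0,\frac{1}{K},\frac{2}{K}$ --- is false, and the corollary should not be read that way. The dedicated-cache system with per-user cache $(K-1)\gamma$ has its own corner points at $\gamma=\frac{j}{K(K-1)}$, $j=0,1,\dots$, so on $\left(0,\frac{1}{K}\right)$ the optimal tradeoff lies strictly below the chord joining $(0,K)$ and $\left(\frac{1}{K},\frac{1}{K}\right)$, while the lower convex envelope of the achievable points of Theorem \ref{thm1} on that interval is exactly that chord; similarly, for $K>2$ one has $\mathcal{R}^{*}(\gamma)=\max\{1-(K-1)\gamma,\,0\}$ on $\left[\frac{1}{K},\frac{2}{K}\right]$, which vanishes already at $\gamma=\frac{1}{K-1}<\frac{2}{K}$, whereas the envelope of $\mathcal{R}_{new}$ equals $\frac{2}{K}-\gamma>0$ there. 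Hence the identity $\mathcal{R}^{*}(\gamma)=\mathcal{R}_{new}(\gamma)$ holds at the points $\gamma\in\{k/K\}$ covered by Theorem \ref{thm1}, which is all the paper claims and proves; you should drop the memory-sharing extension or restrict your claim accordingly.
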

\begin{proof}
If $kz > K-1,$ then the transmission rate is zero, since each user can access all the sub-files of all the files. If $k \geq 2$, then  $kz=2(K-1) > K-1$ and hence  the transmission rate is zero. When $k =1$, we have  $K-kz=1$ and hence $\mathcal{R}_{new}\left( \gamma \right ) = \frac{1}{K}=\mathcal{R}^{*}\left( \gamma \right).$
\end{proof}
\subsection{Comparison of our scheme with the scheme in \cite{parinello2019multiacessgains}}
In Fig. 3(a), one specific point $(K-1,\frac{1}{K})$ is marked  which corresponds to one of the cases discussed in \cite{parinello2019multiacessgains}, where $z=\frac{K-1}{k}$. Considering this case, in Fig. 3(a), since $k =1$, we have $z=K-1$, for which a scheme was proposed in \cite{parinello2019multiacessgains} which achieves the optimal transmission rate $\frac{1}{K}$. The transmission rate obtained from our scheme coincides with that.  In general, when our scheme is specialized to $z=\frac{K-1}{k}$,  for any $k $, our result coincides with that of Theorem $2$ in \cite{parinello2019multiacessgains}. This is because $\mathcal{R}_{new}(\gamma )  = \frac{1}{K}$, as $K- k z = K -(K-1) =1$. Hence, $\mathcal{R}^{*}(\gamma) =\mathcal{R}_{new}(\gamma)=\frac{1}{K}$.
%

\subsection{Comparison of our scheme with the scheme in \cite{Caire2020noveltransformation}}
After the initial submission of this manuscript the paper \cite{Caire2020noveltransformation} appeared in which the authors have used a \textit{novel transformation} approach to multi-access caching schemes. The coding gain achieved using  the novel transformation approach is $K\gamma +1$ with a sub-packetization level of $K {K-K\gamma(L-1) \choose K\gamma}$.
It is proved in \cite{Caire2020noveltransformation} that the rate achieved in \cite{Caire2020noveltransformation} is less than the rate achieved using our scheme when $z \leq \left ( 1-\frac{2}{k}\right )\frac{K}{k}.$ We prove that our scheme performs better than the scheme proposed in  \cite{Caire2020noveltransformation} when $z \geq \left ( 1-\frac{1}{k}\right )\frac{K}{k}$ as follows:

 If $(K- k z)$ is even, then the rate achieved using our scheme is
\begin{align*}
\mathcal{R}_{new}(\gamma) &= 2 \sum_{r=\frac{K- k z}{2}+1}^{K- k z}\frac{1}{1+\left \lceil \frac{k z}{r} \right \rceil} \\ &
\leq 2\left (\frac{K-kz}{2} \right )\frac{1}{1+\left \lceil \frac{k z}{K-kz} \right \rceil}\\
&\leq \frac{K-kz}{1+\left \lceil \frac{k z}{K-kz} \right \rceil}.
\end{align*}
Similarly, if $(K- kz)$ is odd, then the rate achieved using our scheme is

{\small
\begin{align*}
\mathcal{R}_{new}(\gamma) 
&= \frac{1}{ \left (\left \lceil  \frac{2k z}{K-k z+1} \right \rceil+1 \right )}+\sum_{r=\frac{K- k z+3}{2}}^{K- k z}\frac{2}{1+\left \lceil \frac{k  z}{r} \right \rceil}.\\
& \leq  \frac{1}{ \left (\left \lceil  \frac{2k z}{K-k z+1} \right \rceil+1 \right )}+  2\left (\frac{K-kz-1}{2} \right )\frac{1}{1+\left \lceil \frac{k z}{K-kz} \right \rceil}\\
& \leq  \frac{1}{ \left (\left \lceil  \frac{k z}{K-k z} \right \rceil+1 \right )}+  \left (K-kz-1 \right )\frac{1}{1+\left \lceil \frac{k z}{K-kz} \right \rceil} \\& \leq \frac{K-kz}{1+\left \lceil \frac{k z}{K-kz} \right \rceil}.
\end{align*}}
Therefore, we have, $
\frac{R_{new}(\gamma)}{R_{NT}(\gamma)} \leq \left (\frac{K-kz}{1+\left \lceil \frac{k z}{K-kz} \right \rceil} \right ) \left (\frac{k+1}{K-kz}\right ) \leq \frac{k+1}{1+\left \lceil \frac{k z}{K-kz} \right \rceil}.$ Hence, $R_{new}(\gamma) \leq R_{NT}(\gamma)$, if
\begin{align*}
\frac{k+1}{1+\left \lceil \frac{k z}{K-kz} \right \rceil} &\leq 1\\ \Rightarrow k+1  &\leq 1+\left \lceil \frac{k z}{K-kz} \right \rceil\\ \Rightarrow k&\leq \left \lceil \frac{k z}{K-kz} \right \rceil\\ \Rightarrow k &\leq \frac{k z}{K-kz} +1\\
\Rightarrow k&\leq \frac{K}{K-kz}\\ \Rightarrow K-kz &\leq \frac{K}{k} \\\Rightarrow K- \frac{K}{k} &\leq kz \\
\Rightarrow z &\geq \frac{K}{k}\left (1- \frac{1}{k}\right ).\\
\end{align*}

\section{placement and delivery scheme}
\label{sec3}
In this section we present our placement and delivery scheme to prove  Theorem \ref{thm1}. 
\subsection{Placement Scheme}
	In the placement phase, we split each file $W^n, n=\{0,1,\ldots,N-1\}$, into $K $ disjoint sub-files $W_{\alpha}^n, \alpha \in \{0,1,\ldots,K-1\}$. Let $M_{\alpha}$ represent the content stored in the cache $C_{\alpha}, \alpha \in \{0,1,\ldots, K-1 \}$. Each cache $C_{\alpha}$ is filled as follows:
	{\small$M_{\alpha}= \{W_{(k \alpha +j) \text{ mod } K}^n:  j \in \{0,1,\ldots ,k -1\},   n \in \{0,1,\ldots,N-1\}\}.$} Each cache stores $k $ sub-files from all the files, where each sub-file is of size $\frac{1}{K}$. Hence, $M=\frac{k N}{K} =N \gamma $, thus meeting our memory constraint.
	
	In this phase, the sub-files of each file are placed in such a way that we first create a list of size $ kK$ by repeating the sequence $\{0, 1, \ldots, K-1 \}$, $k$ times, i.e., $\{0,1, \ldots,K-1,0,1,\ldots,K-1,0,1,2,\ldots, K-1,\ldots\}.$ Then we fill the caches by taking $k$ items sequentially from the list, where each item on the list corresponds to the index of the sub-file. So, the first cache is filled with the first $k$ items, the second cache with the next $k$ items and so on. We put a restriction on the value of $k$ (that $\gcd(k,K)=1$) to make sure that if we take any two caches, they do not store the same $k$ sub-files. This is since we use this property in our delivery algorithm.
	
		Each user can access $z$ neighboring caches and each cache stores $k$ consecutive sub-files of each file. 
	We take only the case when $z < \left \lceil\frac{K}{k} \right \rceil$. This is because, if $z \geq \left \lceil \frac{K}{k} \right \rceil$, then the user has access to all the sub-files of each of the files. Hence, for the case under consideration, each user has access to $kz$ consecutive sub-files of each file since the content in any consecutive $z$ caches are disjoint from one another if $z < \left \lceil\frac{K}{k} \right \rceil$.
	That is, for each user $U_{\alpha} , \alpha\in [0,K-1]$, the  accessible cache content is $\{W_{(k \alpha +i) \text{ mod } K}^{n} : i \in [0, kz-1],   n \in [0,N-1]\}$.
	 \begin{algorithm*}
	
	\caption{Delivery scheme for multi-access coded caching if $\gamma \in \left \{\frac{k}{K}: \gcd(k,K)=1,  k\in \left [1,K \right ]\right \}$.}
	\label{algo2}

	\For{iteration $i=1,2,\ldots, \left \lceil \frac{K-k z}{2} \right \rceil $}{
		Let $r =\left \lfloor \frac{K-k z}{2} \right \rfloor +i$ and $p=\left \lceil \frac{k z}{r} \right \rceil+1$.\\
		Define two one-to-one and onto functions, on $ [0,K-1]$: $\pi_{i,1 }(\alpha) =\left (k\alpha-\left \lfloor \frac{K-k z}{2} \right \rfloor -i \right ) \text{ mod } K=(k\alpha-r) \text{ mod } K$ and $\pi_{i,2 }(\alpha) =\left (k\alpha+kz+\left \lfloor \frac{K-k z}{2} \right \rfloor +i-1 \right ) \text{ mod } K=(k\alpha+kz+r-1) \text{ mod } K$.
		
		\eIf{$(K-kz)$ is odd and $i=1$}
		{	
			Split each sub-file in $\left \{ W_{\alpha}^{d \left (\pi^{-1}_{{i,1}}(\alpha) \right )}: \alpha \in [0,K-1] \right \}$  into $p$ blocks: $\left \{ W_{\alpha,l}^{d \left (\pi^{-1}_{{i,1}}(\alpha) \right )}: \alpha \in [0,K-1], l \in[0,p-1] \right \}$.
			
			\For{each $j \in \{0,1, \ldots , K-1\}$} {Transmit one coded symbol 
				$T_j^i = \bigoplus_{m=0}^{p-1} W_{(mr+j) \text{ mod } K,m}^{d \left (\pi^{-1}_{{i,1}}((mr+j) \text{ mod } K) \right ) }$
		}}
		{
			
			\eIf{$p$ is even}
			{Split each sub-file in $\left \{ W_{\alpha}^{d\left(\pi^{-1}_{i ,1}(\alpha)\right)},  W_{\alpha}^{d\left(\pi^{-1}_{i,2 }(\alpha)\right )}: \alpha \in [0,K-1] \right \}$ into $\frac{p}{2} $ blocks: {\small $\left\{W_{\alpha,l}^{d\left(\pi^{-1}_{{i,1}}(\alpha)\right)}, W_{\alpha,l}^{d\left(\pi^{-1}_{{i,2}}(\alpha)\right)}: \alpha \in [0,K-1], l \in \left [0, \frac{p}{2}-1  \right ] \right\}$.}
				
				\For{each $j \in \{0,1, \ldots , K-1\}$} {
					Transmit one coded symbol $T_j^{i}$:
					\begin{align*}
					T_j^{i} = \bigoplus_{m =0}^{\frac{p}{2}-1} \left (W_{(mr+j) \text{ mod } K,m}^{d\left(\pi^{-1}_{{i,1}}((mr+j) \text{ mod } K)\right)  }
					\oplus W_{\left (\left (m+\frac{p}{2} \right )r+j \right ) \text{ mod } K,m}^{d\left(\pi^{-1}_{{i,2}}\left(\left (\left (m+\frac{p}{2} \right )r+j \right ) \text{ mod } K\right )\right)  } \right ) 
					\end{align*}
			}} 
			{Split each sub-file in $\left \{ W_{\alpha}^{d\left(\pi^{-1}_{i,1}(\alpha)\right)},  W_{\alpha}^{d\left(\pi^{-1}_{i ,2}(\alpha)\right)}: \alpha \in [0,K-1] \right \}$  into $p $ blocks: {\small $\left\{W_{\alpha,l}^{d\left(\pi^{-1}_{{i,1}}(\alpha)\right)}, W_{\alpha,l}^{d\left(\pi^{-1}_{{i,2}}(\alpha)\right )}: \alpha \in [0,K-1], l \in \left [0, p-1  \right ] \right\}$.}
				
				\For{each $j \in \{0,1, \ldots , K-1\}$} 
				{Transmit two coded symbols:
					\begin{align*}
					T_{j,1}^{i} = \left (\bigoplus_{m=0}^{\frac{p-3}{2}} W_{(mr+j) \text{ mod } K,m}^{d\left (\pi^{-1}_{{i,1}}((mr+j) \text{ mod } K)\right) } \right ) \bigoplus \left (
					\bigoplus_{m= \frac{p-1}{2}}^{p-1} W_{(mr+j) \text{ mod } K,m-\frac{p-1}{2}}^{d \left (\pi^{-1}_{{i,2}}((mr+j) \text{ mod } K)\right)  } \right )
					\end{align*}
					\begin{align*}
					T_{j,2}^{i} = \left (\bigoplus_{m =0}^{\frac{p-1}{2}} W_{(mr+j) \text{ mod } K,\frac{p-1}{2}+m}^{d\left(\pi^{-1}_{{i,1}}((mr+j) \text{ mod } K)\right) } \right ) \bigoplus \left (
					\bigoplus_{m= \frac{p+1}{2}}^{p-1} W_{(mr+j) \text{ mod } K,m}^{d\left(\pi^{-1}_{{i,2}}((mr+j) \text{ mod } K)\right)  } \right )
					\end{align*}
			}} 
	}} 
	
\end{algorithm*}

\subsection{Delivery Scheme}
	Each user's demand, of one file among the $N$ files from the central server, is revealed after the placement phase. Once the demand vector ${\bf d}$ is known, Algorithm \ref{algo2} provides the transmissions done by the server when  $\gamma \in \left \{\frac{k}{K}: \gcd(k,K)=1,  k\in \left [1,K \right ]\right \}$. 
	
	The parts of the file $W^{d(\alpha)}$ available with the user $U_{\alpha}$ are $kz$ consecutive sub-files, i.e.,  $\{W_{(k \alpha+i) \text{ mod } K}^{d(\alpha)} : i \in [0,kz-1]\}$. Hence,  the user $U_{\alpha}$ should be able to decode all the remaining $K-kz$ sub-files, i.e.,  $\{W_{(k \alpha+kz+i) \text{ mod } K}^{d(\alpha)} : i \in [0,K-k z-1]\}$ from the transmissions done using Algorithm \ref{algo2}.
	
	The coded symbols are obtained in $ \left \lceil \frac{K-k z}{2} \right \rceil $ iterations using Algorithm \ref{algo2}. With each iteration, $r$ takes values from $\left \lfloor \frac{K-k z}{2} \right \rfloor +1$ to $K-kz$ (as in line 2 of Algorithm \ref{algo2}). So, the value of $p$ ranges from $\left \lceil \frac{k z}{\left \lfloor \frac{K-k z}{2} \right \rfloor +1}  \right \rceil +1$ to $ \left \lceil \frac{k z}{K-kz} \right \rceil +1$. That is, during iteration $i$, the value of $p$ is chosen as $\left \lceil \frac{k z}{\left \lfloor \frac{K-k z}{2} \right \rfloor +i}  \right \rceil +1$.	Hence, $p$ is a non-increasing function of $i$.
 	For a given $k$ and $z$, if $kz < \frac{K}{2}$, then $\left \lceil \frac{k z}{\left \lfloor \frac{K-k z}{2} \right \rfloor +i}  \right \rceil =1, $ for each $i \in \left [1, \left \lceil \frac{K-k z}{2} \right \rceil \right ]$. Hence, the value of $p$ is always $2$ for each of the iterations. If $kz \geq \frac{K}{2}$, the maximum value $p$ can take is when $kz$ is maximum, i.e., $kz=K-1$. When $kz=K-1$, the value that  $p$ takes during iteration $i$ is $\left \lceil \frac{K-1}{i}  \right \rceil +1$. So, when $kz=K-1$, the maximum value that $p$ takes	is $K $ (when $i=1$).  Hence, the value of $p$ ranges from $2$ to $K$.

	In  line 3 of Algorithm \ref{algo2}, we define two functions $\pi_{i,1 }(\alpha)$ and $\pi_{i,2 }(\alpha)$. We have $\{k\alpha \text{ mod } K: \alpha\in [0,K-1]\} =\{0,1, \ldots , K-1\}$, since we assume that $\gcd(k,K)=1$. Hence $\{ (k\alpha -r ) \text{ mod } K : \alpha \in [0,K-1] \}=\{0,1, \ldots , K-1\}$.  Similarly, for the same reason,  we have $\{(k\alpha+kz+r-1 )  \text{ mod } K: \alpha\in [0,K-1]\} =\{0,1, \ldots , K-1\}$.
	  Therefore, we have two one-to-one and onto functions
$\pi_{i,1 }(\alpha) =(k\alpha-r) \text{ mod } K$ and $\pi_{i,2 }(\alpha) =(k\alpha+kz+r-1) \text{ mod } K$, on $ [0,K-1]$.

Now we look into the transmission rate involved in this scheme. If $K-k z $ is even, from Algorithm \ref{algo2}, for each $r \in \left \{\frac{K-k z}{2}+1,\frac{K-k z}{2}+2,\ldots, K-kz \right \}, $ the amount of transmission done by the server is $\frac{2K}{p} \left ( \frac{1}{K} \right)=\frac{2}{1+\left \lceil \frac{k z}{r} \right \rceil}$ files, accounting for a  total transmission rate of $ \mathcal{R}_{new}(\gamma) = 2 \sum_{r=\frac{K- k z}{2}+1}^{K- k z}\frac{1}{1+\left \lceil \frac{k z}{r} \right \rceil}$. Similarly considering the case when $K-k z >1$ and  is odd, from Algorithm \ref{algo2}, for each $r \in \{\frac{K-k z+1}{2}+1,\ldots, K-kz\}, $ the amount of transmission done by the server is $\frac{2K}{p} \left ( \frac{1}{K} \right)=\frac{2}{1+\left \lceil \frac{k z}{r} \right \rceil}$ files. If $r=\frac{K-k z+1}{2}$, the amount of transmission done by the server is $\frac{K}{p} \left ( \frac{1}{K} \right)=\frac{1}{1+\left \lceil \frac{k z}{r} \right \rceil}$ files. Hence the overall transmission rate is $	 \mathcal{R}_{new}(\gamma) = 2 \left ( \frac{1}{2 \left (\left \lceil  \frac{2k z}{K-k z+1} \right \rceil+1 \right )}+\sum_{r=\frac{K- k z+3}{2}}^{K- k z}\frac{1}{1+\left \lceil \frac{k z}{r} \right \rceil } \right )$. If $K-k z =1$,  the amount of transmission done by the server is $\frac{K}{p} \left ( \frac{1}{K} \right)=\frac{1}{1+\left \lceil \frac{k z}{t} \right \rceil} = \frac{1}{1+\left \lceil \frac{2(k z)}{K-k z +1} \right \rceil } =\frac{1}{1+(K-1)} =\frac{1}{K}$ files. Hence the transmission rate is $\mathcal{R}_{new}(\gamma) = \frac{1}{K}$. 
The detailed proof of the delivery scheme, i.e., the proof of decoding  using Algorithm \ref{algo2} is given in Appendix A. 

Now, we illustrate the idea of our proposed placement and delivery schemes using the following examples.
\begin{table*}[!ht]
	\centering
	\begin{tabular}{ |c|c|c|c|c|c|} 
		\hline
		&{\small \textit{ $W_0^n$} }&{\small \textit{ $W_1^n$}}&{\small \textit{ $W_2^n$} }&{\small \textit{$W_3^n$}}&{\small  \textit{ $W_4^n$} } \\
		\hline \hline
		{\small $T_{0}^1$} &\textcolor{red}{{\small   $(0,d(2))$}}&  &\textcolor{blue}{{\small $(1,d(4))$ }}&&    \\ \hline
		{\small $T_{1}^1$} & & \textcolor{red}{{\small   $(0,d(3))$}}& &\textcolor{blue}{{\small $(1,d(0))$}}& \\ \hline
		{\small $T_{2}^1$ }& &   &\textcolor{red}{ {\small $(0,d(4))$}}&&\textcolor{blue}{{\small $(1,d(1))$ }}   \\ \hline
		{\small $T_{3}^1$ }&\textcolor{blue}{{\small $(1,d(2))$  }}&   & &\textcolor{red}{{\small $(0,d(0))$}}&  \\ \hline
		{\small $T_{4}^1$ }& & \textcolor{blue}{{\small $(1,d(3))$ }} & &&\textcolor{red}{{\small $(0,d(1))$ } } \\ \hline
	\end{tabular}
	\caption{ {\small Table that illustrates the sub-files included in the coded symbols obtained in  the first iteration in Example \ref{exmp1}.} 
	}
	\label{tab eg1 iter 1}
\end{table*}
\begin{table*}[!ht]
	\centering
	\begin{tabular}{ |c|c|c|c|c|c|} 
		\hline
	&{\small  \textit{ $W_0^n$} }&{\small  \textit{ $W_1^n$}}&{\small \textit{ $W_2^n$} }&{\small \textit{$W_3^n$}}& {\small \textit{ $W_4^n$}} \\
		\hline \hline
	{\small 	$T_{0}^2$ }&\textcolor{red}{{\small   $(0,d(3))$}}&  & &\textcolor{blue}{{\small  $(0,d(4))$}}&  \\ \hline
	{\small $T_{1}^2$} & & \textcolor{red}{{\small   $(0,d(4))$}}& &&\textcolor{blue}{{\small  $(0,d(0))$}} \\ \hline
    {\small 	$T_{2}^2$} & \textcolor{blue}{{\small $(0,d(1))$}}&   & \textcolor{red}{{\small $(0,d(0))$}}&&  \\ \hline
    {\small $T_{3}^2$ }&  &\textcolor{blue}{{\small  $(0,d(2))$ }} & &\textcolor{red}{{\small $(0,d(1))$}}& \\ \hline
    {\small $T_{4}^2$ }& &  &\textcolor{blue}{{\small  $(0,d(3))$} }&&\textcolor{red}{{\small $(0,d(2))$}}  \\ \hline
	\end{tabular}
	\caption{ {\small Table that illustrates the sub-files included in the coded symbols obtained in the second iteration in Example \ref{exmp1}}. 
	}
	\label{tab eg1 iter 2}
\end{table*}
\begin{example}
	\label{exmp1}
	Let  $N= K=5, k =1$ and $z=2$. The server stores $5$ files: $\{W^0,W^1,W^2,W^3,W^4\}$, and each file $W^n,n\in \{0,1,2,3,4\}$, is divided into $5$ sub-files: $\{W_0^n,W_1^n,W_2^n,W_3^n,W_4^n\}$. Each cache $C_{\alpha}, \alpha \in \{0,1,2,3,4\},$ is filled with one sub-file $W_{\alpha}^n$ of each file $W^n, n \in  \{0,1,2,3,4\}$,
	and each user $U_{\alpha}, \alpha \in \{0,1,2,3,4\}$, has access to all the caches in $\mathcal{C}_\alpha =\{C_{\alpha},C_{(\alpha+1) \text{ mod } 5}\}$.
	Let the demand vector be ${\bf d}=(d(0),d(1),d(2),d(3),d(4))$.  The parts of the file $W^{d(\alpha)}$ available with the user $U_{\alpha}$ are $2$ consecutive sub-files, i.e.,  $\{W_{\alpha \text{ mod } 5}^{d(\alpha)}, W_{( \alpha+1) \text{ mod } 5}^{d(\alpha)}\}$. Hence,  the user $U_{\alpha}$ should be able to decode all the remaining $3$ sub-files  $\{W_{(\alpha+2) \text{ mod } 5}^{d(\alpha)},W_{(\alpha+3) \text{ mod } 5}^{d(\alpha)},W_{(\alpha+4) \text{ mod } 5}^{d(\alpha)} \}$.

	The coded symbols are transmitted in two iterations. In the first iteration, each user $U_{\alpha}$ retrieves the sub-file $W_{(\alpha+3) \text{ mod } 5}^{d(\alpha)}$ by using the transmissions done in the first iteration, while in the second iteration, each user $U_{\alpha}$ retrieves the sub-files $W_{(\alpha+2) \text{ mod } 5}^{d(\alpha)}$ and $W_{(\alpha+4) \text{ mod } 5}^{d(\alpha)}$ by using the transmissions done in the second iteration.
	
	In the first iteration, each sub-file $W_{(\alpha+3) \text{ mod } 5}^{d(\alpha)}, \forall \alpha \in [0,4],$ is split into $2$ blocks: \\
	$\{ W_{(\alpha+3) \text{ mod } 5,0}^{d(\alpha)},W_{(\alpha+3) \text{ mod } 5,1}^{d(\alpha)}\}$. We obtain five coded symbols, $T_0^1, T_1^1, T_2^1,T_3^1,T_4^1,$ in the first iteration.  Each coded symbol, $T_j^1, j\in [0,4],$ obtained at this iteration, has XOR of blocks of $2$ sub-files where the $2$ sub-files are taken at an interval of $2$ in a cyclic wrap around way, as illustrated in Table \ref{tab eg1 iter 1}. In this table, the columns represent the indices of sub-files and the rows represent the coded symbols. An empty cell means the sub-file (of any file) corresponding to its column is not present in the coded symbol corresponding to its row. An ordered pair $(l,d(n))$ present in any cell, say the cell corresponding to the row of $T_{j}^1$ and to the column $h,$ for some $j,h \in [0,4]$, implies that $W_{h,l}^{d(n)}$ is present in the coded symbol $T_{j}^1$, i.e., $l$ represents the block and $d(n)$ represents the file index demanded by the user $U_n$. The $0^{th}$ block of the first sub-file (indicated by \textcolor{red}{red} color) and the $1^{st}$ block of the second sub-file (indicated by \textcolor{blue}{blue} color) are taken for each coded symbol, as in Table \ref{tab eg1 iter 1}. The following coded symbols are transmitted at this iteration:
{\small
\begin{align*}
T_0^1=& \textcolor{red}{W_{0,0}^{d(2)}} \oplus \textcolor{blue}{W_{2,1}^{d(4)}}, &T_1^1=& \textcolor{red}{W_{1,0}^{d(3)}} \oplus \textcolor{blue}{W_{3,1}^{d(0)}}, \\	T_2^1=& \textcolor{red}{W_{2,0}^{d(4)}} \oplus \textcolor{blue}{W_{4,1}^{d(1)}}, &T_3^1=& \textcolor{red}{W_{3,0}^{d(0)}} \oplus \textcolor{blue}{W_{0,1}^{d(2)}}, \\	T_4^1=& \textcolor{red}{W_{4,0}^{d(1)}} \oplus \textcolor{blue}{W_{1,1}^{d(3)}}.		
\end{align*}}

	Next, in the second iteration, each of the sub-files in $\{W_{(\alpha+2) \text{ mod } 5}^{d(\alpha)},W_{(\alpha+4) \text{ mod } 5}^{d(\alpha)}: \alpha \in [0,4] \} $ is split into $1$ block: 
	$\{W_{(\alpha+2) \text{ mod } 5,0}^{d(\alpha)},W_{(\alpha+4) \text{ mod } 5,0}^{d(\alpha)}: \alpha \in [0,4] \} $ which is basically the sub-file itself. Here, each coded symbol $T_j^2, j\in[0,4],$ consists of $2$ sub-files taken at an interval of $3$ in a cyclic wrap around way, as illustrated in Table \ref{tab eg1 iter 2}. In this table also, like in the previous iteration, an ordered pair $(l,d(n))$ present in any cell, say the cell corresponding to the row of $T_{j}^2$ and to the column $h,$ for some $j,h \in [0,4]$, implies that $W_{h,l}^{d(n)}$ is present in the coded symbol $T_{j}^2$, i.e., the columns represent the indices of sub-files, the rows represent the coded symbols, $l$ represents the block and $d(n)$ represents the file index demanded by the user $U_n$. The following coded symbols are transmitted at the second iteration:
{\small
\begin{align*}
		T_0^2=& \textcolor{red}{W_{0,0}^{d(3)}} \oplus \textcolor{blue}{W_{3,0}^{d(4)}}, &T_1^2=& \textcolor{red}{W_{1,0}^{d(4)}} \oplus \textcolor{blue}{W_{4,0}^{d(0)}}, \\T_2^2=& \textcolor{red}{W_{2,0}^{d(0)}} \oplus \textcolor{blue}{W_{0,0}^{d(1)}}, &T_3^2=& \textcolor{red}{W_{3,0}^{d(1)}} \oplus \textcolor{blue}{W_{1,0}^{d(2)}},\\ T_4^2=& \textcolor{red}{W_{4,0}^{d(2)}} \oplus \textcolor{blue}{W_{2,0}^{d(3)}}.
\end{align*}}
	Now, each user $U_{\alpha}$ needs to recover the demanded file $W^{d(\alpha)}.$  Let us first consider the user $U_0.$  The user $U_0 $ retrieves $W_{3,0}^{d(0)}$ from $T_3^1$ since $W_{0,1}^{d(2)}$ is available at its cache while it retrieves $W_{3,1}^{d(0)}$ from $T_1^1$.  The sub-file $W_2^{d(0)}$- $W_{2,0}^{d(0)} $ is recovered from $T_2^2$ whereas the sub-file $W_4^{d(0)}$- $W_{4,0}^{d(0)} $ is recovered from $T_1^2$. The user $U_0$ has decoded the file $W^{d(0)}$ since it has retrieved all the sub-files corresponding to the file $W^{d(0)}$. Similarly all other users can decode their demanded file.
	In this particular example, the transmission rate achieved using our scheme is $\mathcal{R}_{new}\left ( \frac{1}{5} \right ) = 1.5$ while the rate achieved in \cite{nikhil2020ratememory} is $\mathcal{R}_{RK}\left ( \frac{1}{5} \right )  =1.8.$ 
\end{example}
\begin{table*}[!ht]
	\centering
	\begin{tabular}{ |c|c|c|c|c|c|c|c|c|} 
		\hline
		&{\small  \textit{ $W_0^n$} }&{\small \textit{ $W_1^n$}}&{\small \textit{ $W_2^n$}} &{\small \textit{$W_3^n$}}&{\small  \textit{ $W_4^n$}} &{\small \textit{ $W_5^n$}} &{\small \textit{$W_6^n$}}&{\small  \textit{ $W_7^n$}} \\
		\hline \hline
		{\small $T_{0,1}^1$ }& \textcolor{red}{{\small  $(0,d(3))$}}&  & &\textcolor{blue}{{\small $(0,d(5))$}}&  &	&\textcolor{blue}{{\small $(1,d(0))$}}&   \\ \hline
		{\small $T_{1,1}^1$ }&  &\textcolor{red}{{\small   $(0,d(4))$}}& &&\textcolor{blue}{{\small $(0,d(6))$ } }&	&&\textcolor{blue}{{\small $(1,d(1))$} }  \\ \hline
		{\small $T_{2,1}^1$} &\textcolor{blue}{{\small  $(1,d(2))$}} & &\textcolor{red}{ {\small $(0,d(5))$} }&&  &\textcolor{blue}{{\small $(0,d(7))$}}	&&   \\ \hline
		{\small $T_{3,1}^1$} &  &\textcolor{blue}{{\small $(1,d(3))$}} & &\textcolor{red}{{\small $(0,d(6))$}}&  &	&\textcolor{blue}{{\small $(0,d(0))$}}& \\ \hline
		{\small $T_{4,1}^1$} &  & &\textcolor{blue}{{\small  $(1,d(4))$}}&& \textcolor{red}{{\small $(0,d(7))$}}&	&& \textcolor{blue}{{\small $(0,d(1))$ }}\\ \hline
		{\small $T_{5,1}^1$} & \textcolor{blue}{{\small $(0,d(2))$ }}& & &\textcolor{blue}{{\small $(1,d(5))$}}&  &\textcolor{red}{{\small $(0,d(0))$}}	&& \\ \hline
		{\small $T_{6,1}^1$} &  &\textcolor{blue}{{\small $(0,d(3))$ }}& && \textcolor{blue}{{\small $(1,d(6))$}} &	&\textcolor{red}{{\small $(0,d(1))$}}&   \\ \hline
		{\small $T_{7,1}^1$} &  & &\textcolor{blue}{{\small $(0,d(4))$}} &&  &	\textcolor{blue}{{\small $(1,d(7))$}}&&\textcolor{red}{{\small $(0,d(2))$ }} \\ \hline
	\end{tabular}
	\caption{Table that illustrates the sub-files included in the coded symbols $T_{j,1}^1, j \in [0,7]$, obtained in the first iteration in Example \ref{exmp2}. 
	 }
	\label{tab eg2 iter 1 part 1}
\end{table*}
\begin{table*}[!ht]
	\centering
	\begin{tabular}{  |c|c|c|c|c|c|c|c|c|} 
		\hline
		&{\small  \textit{ $W_0^n$}} &{\small \textit{ $W_1^n$}}&{\small \textit{ $W_2^n$}} &{\small \textit{$W_3^n$}}&{\small  \textit{ $W_4^n$}}&{\small \textit{ $W_5^n$}} &{\small \textit{$W_6^n$}}&{\small  \textit{ $W_7^n$}} \\
		\hline \hline
	 {\small $T_{0,2}^1$} & \textcolor{red}{{\small  $(1,d(3))$}}&  & &\textcolor{red}{{\small $(2,d(6))$}}&&&\textcolor{blue}{{\small $(2,d(0))$}}&  \\ \hline
		{\small $T_{1,2}^1$ }&  &\textcolor{red}{{\small $(1,d(4))$}}  & &&\textcolor{red}{{\small $(2,d(7))$}}&&&\textcolor{blue}{{\small $(2,d(1))$ }} \\ \hline
	{\small $T_{2,2}^1$} & \textcolor{blue}{{\small $(2,d(2))$ }} & &\textcolor{red}{{\small  $(1,d(5))$ }}&&&\textcolor{red}{{\small $(2,d(0))$}}&& \\ \hline
		{\small $T_{3,2}^1$} &   &\textcolor{blue}{{\small $(2,d(3))$} }&  &\textcolor{red}{{\small $(1,d(6))$}}&&&\textcolor{red}{{\small $(2,d(1))$}}& \\ \hline
	{\small $T_{4,2}^1$ }&   & & \textcolor{blue}{{\small $(2,d(4))$} }&&\textcolor{red}{{\small $(1,d(7))$}}&&&\textcolor{red}{{\small $(2,d(2))$ }}\\ \hline
		{\small $T_{5,2}^1$ }& \textcolor{red}{{\small $(2,d(3))$}}  & & &\textcolor{blue}{{\small $(2,d(5))$} }&&\textcolor{red}{{\small $(1,d(0))$}}&& \\ \hline
		{\small $T_{6,2}^1$} &   &\textcolor{red}{{\small $(2,d(4))$}} & &&\textcolor{blue}{{\small $(2,d(6))$} }&&\textcolor{red}{{\small $(1,d(1))$}}& \\ \hline
		{\small $T_{7,2}^1$} &   & &\textcolor{red}{{\small $(2,d(5))$ }}&& &\textcolor{blue}{{\small $(2,d(7))$}}&& \textcolor{red}{{\small $(1,d(2))$}}\\ \hline
	\end{tabular}
	\caption{Table that illustrates the sub-files included in the coded symbols $T_{j,2}^1, j \in [0,7]$, obtained in the first iteration in Example \ref{exmp2}. 
	}
	\label{tab eg2 iter 1 part 2}
\end{table*}
%
%
%
%
%
%

\begin{table*}[!ht]
\centering
\begin{tabular}{ |c||c|c|c|c|c|c|c|c|} 
\hline
&{\small  \textit{ $W_0^n$}} &{\small \textit{ $W_1^n$}}&{\small \textit{ $W_2^n$}} &{\small \textit{$W_3^n$}}&{\small  \textit{ $W_4^n$}}&{\small \textit{ $W_5^n$}} &{\small \textit{$W_6^n$}}&{\small  \textit{ $W_7$}} \\
\hline \hline
{\small $T_{0}^2$} &  \textcolor{red}{{\small $(0,d(4))$}}&  & &&\textcolor{blue}{{\small $(0,d(5))$}}&&&  \\ \hline
{\small $T_{1}^2$} & & \textcolor{red}{{\small  $(0,d(5))$}}& &&&\textcolor{blue}{{\small $(0,d(6))$}}&&  \\ \hline
{\small $T_{2}^2$} & &  &\textcolor{red}{{\small $(0,d(6))$}}&&&&\textcolor{blue}{{\small $(0,d(7))$}}&  \\ \hline
{\small $T_{3}^2$ }&   &  &&\textcolor{red}{{\small $(0,d(7))$}}&&&&\textcolor{blue}{{\small $(0,d(0))$} } \\ \hline
{\small $T_{4}^2$} & \textcolor{blue}{{\small  $(0,d(1))$} }  &  &&&\textcolor{red}{{\small $(0,d(1))$}}&&&\\ \hline
{\small $T_{5}^2$ }&   &\textcolor{blue}{{\small $(0,d(2))$}} & & &&\textcolor{red}{{\small $(0,d(2))$}}&& \\ \hline
{\small $T_{6}^2$ }&   & &\textcolor{blue}{{\small $(0,d(3))$ }}&& &&\textcolor{red}{{\small $(0,d(3))$}}& \\ \hline
{\small $T_{7}^2$} &   & & &\textcolor{blue}{{\small $(0,d(4))$}}& &&&\textcolor{red}{ {\small $(0,d(4))$}}\\ \hline
\end{tabular}
\caption{ Table that illustrates the sub-files included in the coded symbols $T_{j}^2, j \in [0,7]$ obtained in the second iteration in Example \ref{exmp2}.  
}
\label{tab eg2 iter 2}
\end{table*}
\begin{example}
	\label{exmp2}
	Let $N=K=8, k =1$ and $z=4.$  The server stores $8$ files: $\{W^0,W^1, \ldots,W^7\}$ and each file $W^n,n\in [0,7]$, is divided into $8$ sub-files:  $\{W_0^n,W_1^n, \ldots ,W_7^n\}$. Each cache $C_{\alpha}, \alpha \in [0,7],$ is filled with one sub-file $W_{\alpha}^n$ of each file $W^n, n \in  [0,7]$ and
	each user $U_{\alpha}, \alpha \in [0,7]$, has access to all the caches in $\mathcal{C}_\alpha =\{C_{\alpha},C_{(\alpha+1) \text{ mod } 8},C_{(\alpha+2) \text{ mod } 8},C_{(\alpha+3) \text{ mod } 8}\}$. Let the demand vector be ${\bf{d}}=(d(0),d(1),d(2),d(3),d(4),d(5),d(6),d(7))$. The parts of the file $W^{d(\alpha)}$ available with the user $U_{\alpha}$ are $4$ consecutive sub-files, i.e.,  $\{W_{\alpha \text{ mod } 8}^{d(\alpha)}, W_{( \alpha+1) \text{ mod } 8}^{d(\alpha)} ,  W_{( \alpha+2 ) \text{ mod } 8}^{d(\alpha)} , W_{( \alpha+3 ) \text{ mod } 8}^{d(\alpha)}\}$. Hence,  the user $U_{\alpha}$ should be able to decode all the remaining $4$ sub-files, i.e., \\ $\{W_{(\alpha+4) \text{ mod } 8}^{d(\alpha)},W_{(\alpha+5) \text{ mod } 8}^{d(\alpha)},W_{(\alpha+6) \text{ mod } 8}^{d(\alpha)},W_{(\alpha+7) \text{ mod } 8}^{d(\alpha)} \}$.
	
	The coded symbols are transmitted in two iterations. 
	In the first iteration, each user $U_{\alpha}$ retrieves the sub-files $W_{(\alpha+5) \text{ mod } 8}^{d(\alpha)}$ and $W_{(\alpha+6) \text{ mod } 8}^{d(\alpha)}$ by using the transmissions done at the first iteration while  in the second iteration, each user $U_{\alpha}$ retrieves the sub-files $W_{(\alpha+4) \text{ mod } 8}^{d(\alpha)}$ and $W_{(\alpha+7) \text{ mod } 8}^{d(\alpha)}$ by using the transmissions done at the second iteration.
	
	In the first iteration, each of the sub-files in $\{W_{(\alpha+5) \text{ mod } 8}^{d(\alpha)},W_{(\alpha+6) \text{ mod } 8}^{d(\alpha)}: \alpha \in [0,7] \} $ is split into $3$ blocks: 
	$\{W_{(\alpha+5) \text{ mod } 8,l}^{d(\alpha)},W_{(\alpha+6) \text{ mod } 8,l}^{d(\alpha)}: \alpha \in [0,7], l \in [0,2] \} $. Here, each coded symbol consists of $3$ sub-files taken at an interval of $3$ in a cyclic wrap around way, as illustrated in Tables \ref{tab eg2 iter 1 part 1} and \ref{tab eg2 iter 1 part 2}.  In both the tables, the columns represent the indices of sub-files and the rows represent the coded symbols. An ordered pair $(l,d(n))$ present in any cell, say the cell corresponding to the row of $T_{j,i}^1$ and to the column $h,$ for some $j,h \in [0,7], i\in \{1,2\}$, implies that $W_{h,l}^{d(n)}$ is present in the coded symbol $T_{j,i}^1$. The $0^{th}$ block of the first and the second sub-files and the $1^{st}$ block of the third sub-file are taken in each coded symbol $T_{j,1}^{1}$ as in Table \ref{tab eg2 iter 1 part 1} while in each coded symbol $T_{j,2}^{1}$,  the $1^{st}$ block of the first sub-file, the $2^{nd}$ block of the second and the third sub-files are taken as in Table \ref{tab eg2 iter 1 part 2}.   The following coded symbols are transmitted at the first iteration:
{\small
\begin{align*}
T_{0,1}^1 =& \textcolor{red}{W_{0,0}^{d(3)}} \oplus \textcolor{blue}{W_{3,0}^{d(5)}} \oplus \textcolor{blue}{W_{6,1}^{d(0)}} & T_{0,2}^1 =& \textcolor{red}{W_{0,1}^{d(3)}} \oplus \textcolor{red}{W_{3,2}^{d(6)}} \oplus \textcolor{blue}{W_{6,2}^{d(0)}} \\ T_{1,1}^1 =& \textcolor{red}{W_{1,0}^{d(4)}} \oplus \textcolor{blue}{W_{4,0}^{d(6)}} \oplus \textcolor{blue}{W_{7,1}^{d(1)}}&
T_{1,2}^1=& \textcolor{red}{W_{1,1}^{d(4)}} \oplus \textcolor{red}{W_{4,2}^{d(7)}} \oplus \textcolor{blue}{W_{7,2}^{d(1)}} \\ T_{2,1}^1 =& \textcolor{red}{W_{2,0}^{d(5)}} \oplus \textcolor{blue}{W_{5,0}^{d(7)}} \oplus \textcolor{blue}{W_{0,1}^{d(2)}} & 	T_{2,2}^1 =& \textcolor{red}{W_{2,1}^{d(5)}} \oplus \textcolor{red}{W_{5,2}^{d(0)}} \oplus \textcolor{blue}{W_{0,2}^{d(2)}}	\\	
T_{3,1}^1 =& \textcolor{red}{W_{3,0}^{d(6)}} \oplus \textcolor{blue}{W_{6,0}^{d(0)}} \oplus \textcolor{blue}{W_{1,1}^{d(3)}}& 	T_{3,2}^1 =& \textcolor{red}{W_{3,1}^{d(6)}} \oplus \textcolor{red}{W_{6,2}^{d(1)}} \oplus \textcolor{blue}{W_{1,2}^{d(3)}} \\ T_{4,1}^1 =& \textcolor{red}{W_{4,0}^{d(7)}} \oplus \textcolor{blue}{W_{7,0}^{d(1)}} \oplus \textcolor{blue}{W_{2,1}^{d(4)}} &
T_{4,2}^1 =& \textcolor{red}{W_{4,1}^{d(7)}} \oplus \textcolor{red}{W_{7,2}^{d(2)}} \oplus \textcolor{blue}{W_{2,2}^{d(4)}} \\ T_{5,1}^1 =& \textcolor{red}{W_{5,0}^{d(0)}} \oplus \textcolor{blue}{W_{0,0}^{d(2)}} \oplus \textcolor{blue}{W_{3,1}^{d(5)}} &	 T_{5,2}^1 =& \textcolor{red}{W_{5,1}^{d(0)}} \oplus \textcolor{red}{W_{0,2}^{d(3)}} \oplus \textcolor{blue}{W_{3,2}^{d(5)}}\\
T_{6,1}^1 =& \textcolor{red}{W_{6,0}^{d(1)}} \oplus \textcolor{blue}{W_{1,0}^{d(3)}}\oplus \textcolor{blue}{W_{4,1}^{d(6)}} &T_{6,2}^1 =& \textcolor{red}{W_{6,1}^{d(1)}} \oplus \textcolor{red}{W_{1,2}^{d(4)}}\oplus \textcolor{blue}{W_{4,2}^{d(6)}} \\ T_{7,1}^1 =& \textcolor{red}{W_{7,0}^{d(2)}} \oplus \textcolor{blue}{W_{2,0}^{d(4)}} \oplus \textcolor{blue}{W_{5,1}^{d(7)}} &
	T_{7,2}^1 =& \textcolor{red}{W_{7,1}^{d(2)}} \oplus \textcolor{red}{W_{2,2}^{d(5)}} \oplus \textcolor{blue}{W_{5,2}^{d(7)}} & & 
\end{align*}}	
	In the second iteration, each of the sub-files in $\{W_{(\alpha+4) \text{ mod } 8}^{d(\alpha)},W_{(\alpha+7) \text{ mod } 8}^{d(\alpha)}: \alpha \in [0,7] \} $ is split into $1$ block: 
	$\{W_{(\alpha+4) \text{ mod } 8,0}^{d(\alpha)},W_{(\alpha+7) \text{ mod } 8,0}^{d(\alpha)}: \alpha \in [0,7] \} $, which is basically the sub-file itself. Here, each coded symbol consists of $2$ sub-files taken at an interval of $4$ in a cyclic wrap around way as illustrated in Table \ref{tab eg2 iter 2}. In this table also, like in the previous iteration, an ordered pair $(l,d(n))$ present in any cell, say the cell corresponding to the row of $T_{j}^2$ and to the column $h,$ for some $j,h \in [0,4]$, implies that $W_{h,l}^{d(n)}$ is present in the coded symbol $T_{j}^2$, i.e., the columns represent the indices of sub-files, the rows represent the coded symbols, $l$ represents the block and $d(n)$ represents the file index demanded by the user $U_n$. The following coded symbols are transmitted during the second iteration:
	{\small
	\begin{align*}
		T_0^2=& \textcolor{red}{W_{0,0}^{d(4)}} \oplus \textcolor{blue}{W_{4,0}^{d(5)}}, & T_1^2=& \textcolor{red}{W_{1,0}^{d(5)}} \oplus \textcolor{blue}{W_{5,0}^{d(6)}}, \\	T_2^2=& \textcolor{red}{W_{2,0}^{d(6)}} \oplus \textcolor{blue}{W_{6,0}^{d(7)}}, & 	T_3^2=& \textcolor{red}{W_{3,0}^{d(7)}} \oplus \textcolor{blue}{W_{7,0}^{d(0)}}, \\ T_4^2=& \textcolor{red}{W_{4,0}^{d(0)}} \oplus \textcolor{blue}{W_{0,0}^{d(1)}},   &	T_5^2=& \textcolor{red}{W_{5,0}^{d(1)}} \oplus \textcolor{blue}{W_{1,0}^{d(2)}} \\	 T_6^2=& \textcolor{red}{W_{6,0}^{d(2)}} \oplus \textcolor{blue}{W_{2,0}^{d(3)}}, &
		T_7^2=& \textcolor{red}{W_{7,0}^{d(3)}} \oplus \textcolor{blue}{W_{3,0}^{d(4)}}.
		\end{align*}}
	Now, each user $U_{\alpha}$ needs to recover the demanded file $W^{d(\alpha)}$ from the above transmissions. Let us first consider the user $U_0.$  The user $U_0 $ retrieves $W_{5,0}^{d(0)}$ from $T_{5,1}^1$ while it retrieves $W_{5,1}^{d(0)}$ and $W_{5,2}^{d(0)}$ from $T_{5,2}^1$ and $T_{2,2}^1$ respectively. Similarly, the user $U_0 $ retrieves $W_{6,0}^{d(0)}$, $W_{6,1}^{d(0)}$ and $W_{6,2}^{d(0)}$ from $T_{3,1}^1,T_{0,1}^1$ and $T_{0,2}^1$ respectively. The sub-file $W_4^{d(0)}$- $W_{4,0}^{d(0)} $ is recovered from $T_4^2$ whereas the sub-file $W_7^{d(0)}$- $W_{7,0}^{d(0)} $ is recovered from $T_3^2$. The user $U_0$ has decoded the file $W^{d(0)}$ since it has retrieved all the sub-files corresponding to the file $W^{d(0)}$. Similarly all other users can decode their demanded file. 
	In this example, the transmission rate achieved using our scheme is $\mathcal{R}_{new}\left ( \frac{1}{8} \right )  = \frac{5}{3}$ while $\mathcal{R}_{RK}\left ( \frac{1}{8} \right )  =2$ in  \cite{nikhil2020ratememory}.
\end{example}
\begin{example}
\label{exmp3}
	Let us take an example when $k=3.$ Let $N=K=10, k =3$ and $z=2.$  The server stores $10$ files: $\{W^0,W^1, \ldots,W^9\}$ and each file $W^n,n\in [0,9]$, is divided into $10$ sub-files:  $\{W_0^n,W_1^n, \ldots ,W_9^n\}$. Each cache $C_{\alpha}, \alpha \in [0,9],$ is filled as 	$M_{\alpha}= \{W_{(3 \alpha +j) \text{ mod } K}^n:  j \in [0,2],   n \in [0,9]\}$, i.e.,
{\small
\begin{align*}
M_0 =& \{W_0^n, W_1^n,W_2^n : n \in [0,9]\}\\	
			M_1 =& \{W_3^n,W_4^n,W_5^n : n \in [0,9]\}\\	M_2 =& \{W_6^n,W_7^n,W_8^n : n \in [0,9]\}\\
M_3 =& \{W_9^n, W_0^n,W_1^n : n \in [0,9]\} \\	M_4 =& \{W_2^n,W_3^n,W_4^n : n \in [0,9]\} \\     	M_5 =& \{W_5^n,W_6^n,W_7^n : n \in [0,9]\} \\ 
M_6 =& \{W_8^n, W_9^n, W_0^n : n \in [0,9]\}\\  	M_7 =& \{W_1^n,W_2^n,W_3^n : n \in [0,9]\}\\	  M_8 =& \{W_4^n,W_5^n,W_6^n : n \in [0,9]\} \\ 	 M_9 =& \{W_7^n,W_8^n,W_9^n : n \in [0,9]\} 
\end{align*}}
	Each user $U_{\alpha}, \alpha \in [0,8]$, has access to all the caches in $\mathcal{C}_\alpha =\{C_{\alpha},C_{(\alpha+1) \text{ mod } 10}\}$. Let the demand vector be ${\bf{d}}=(d(0),d(1),d(2),d(3),d(4),d(5),d(6),d(7),d(8),d(9))$. 	The parts of the file $W^{d(\alpha)}$ available with the user $U_{\alpha}$ are $6$ consecutive sub-files, i.e.,  $\{W_{(3 \alpha) \text{ mod } 10}^{d(\alpha)}, W_{(3 \alpha+1) \text{ mod } 10}^{d(\alpha)} , \ldots , W_{(3 \alpha+5 ) \text{ mod } 10}^{d(\alpha)}\}$. Hence,  the user $U_{\alpha}$ should be able to decode all the remaining $4$ sub-files, i.e., \\ $\{W_{(3\alpha+6) \text{ mod } 10}^{d(\alpha)},W_{(3\alpha+7) \text{ mod } 10}^{d(\alpha)},W_{(3\alpha+8) \text{ mod } 10}^{d(\alpha)},W_{(3\alpha+9) \text{ mod } 10}^{d(\alpha)} \}$.
	
		The coded symbols are transmitted in two iterations. In the first iteration, each user $U_{\alpha}$ retrieves the sub-files $W_{(3\alpha+7) \text{ mod } 10}^{d(\alpha)}$ and $W_{(3\alpha+8) \text{ mod } 10}^{d(\alpha)}$ by using the transmissions done at the first iteration while  in the second iteration, each user $U_{\alpha}$ retrieves the sub-files $W_{(3\alpha+6) \text{ mod } 10}^{d(\alpha)}$ and $W_{(3\alpha+9) \text{ mod } 10}^{d(\alpha)}$ by using the transmissions done at the second iteration.
	
	In the first iteration, each of the sub-files in $\{W_{(3\alpha+7) \text{ mod } 10}^{d(\alpha)},W_{(3\alpha+8) \text{ mod } 10}^{d(\alpha)}: \alpha \in [0,9] \} $ is split into $3$ blocks: 
	$\{W_{(3\alpha+7) \text{ mod } 10,l}^{d(\alpha)},W_{(3\alpha+8) \text{ mod } 10,l}^{d(\alpha)}: \alpha \in [0,9], l \in [0,2] \} $. Here, each coded symbol is XOR of blocks of $3$ sub-files taken at an interval of $3$ in a cyclic wrap around way.  The $0^{th}$ block of the first and the second sub-files and the $1^{st}$ block of the third sub-file are taken in each coded symbol $T_{j,1}^{1}$ while in each coded symbol $T_{j,2}^{1}$,  the $1^{st}$ block of the first sub-file, the $2^{nd}$ block of the second and the third sub-files are taken.  The following coded symbols are transmitted during the first iteration:	
{\small
\begin{align*}
T_{0,1}^1=& \textcolor{red}{W_{0,0}^{d(1)}} \oplus \textcolor{blue}{W_{3,0}^{d(5)}} \oplus \textcolor{blue}{W_{6,1}^{d(6)}},  & T_{0,2}^1=& \textcolor{red}{W_{0,1}^{d(1)}} \oplus \textcolor{red}{W_{3,2}^{d(2)}} \oplus \textcolor{blue}{W_{6,2}^{d(6)}}, \\   T_{1,1}^1=& \textcolor{red}{W_{1,0}^{d(8)}} \oplus \textcolor{blue}{W_{4,0}^{d(2)} }\oplus \textcolor{blue}{W_{7,1}^{d(3)}} , &
T_{1,2}^1=& \textcolor{red}{W_{1,1}^{d(8)}} \oplus \textcolor{red}{W_{4,2}^{d(9)}} \oplus \textcolor{blue}{W_{7,2}^{d(3)}} , \\  T_{2,1}^1=& \textcolor{red}{W_{2,0}^{d(5)}} \oplus \textcolor{blue}{W_{5,0}^{d(9)}} \oplus \textcolor{blue}{W_{8,1}^{d(0)}}, & T_{2,2}^1=& \textcolor{red}{W_{2,1}^{d(5)}} \oplus \textcolor{red}{W_{5,2}^{d(6)}} \oplus \textcolor{blue}{W_{8,2}^{d(0)}}, \\  
T_{3,1}^1=& \textcolor{red}{W_{3,0}^{d(2)}} \oplus \textcolor{blue}{W_{6,0}^{d(6)}} \oplus \textcolor{blue}{W_{9,1}^{d(7)}},   & T_{3,2}^1=& \textcolor{red}{W_{3,1}^{d(2)}} \oplus \textcolor{red}{W_{6,2}^{d(3)}} \oplus \textcolor{blue}{W_{9,2}^{d(7)}}, \\	T_{4,1}^1=& \textcolor{red}{W_{4,0}^{d(9)}} \oplus \textcolor{blue}{W_{7,0}^{d(3)}} \oplus \textcolor{blue}{W_{0,1}^{d(4)}},&
T_{4,2}^1=& \textcolor{red}{W_{4,1}^{d(9)}} \oplus \textcolor{red}{W_{7,2}^{d(0)}} \oplus \textcolor{blue}{W_{0,2}^{d(4)}}, \\  T_{5,1}^1=& \textcolor{red}{W_{5,0}^{d(6)}} \oplus \textcolor{blue}{W_{8,0}^{d(0)}} \oplus \textcolor{blue}{W_{1,1}^{d(1)}} ,   & T_{5,2}^1=& \textcolor{red}{W_{5,1}^{d(6)}} \oplus \textcolor{red}{W_{8,2}^{d(7)}} \oplus \textcolor{blue}{W_{1,2}^{d(1)} }, \\
T_{6,1}^1=& \textcolor{red}{W_{6,0}^{d(3)}} \oplus \textcolor{blue}{W_{9,0}^{d(7)}} \oplus \textcolor{blue}{W_{2,1}^{d(8)}},  & T_{6,2}^1=& \textcolor{red}{W_{6,1}^{d(3)}} \oplus \textcolor{red}{W_{9,2}^{d(4)}} \oplus \textcolor{blue}{W_{2,2}^{d(8)}}, \\  T_{7,1}^1=& \textcolor{red}{W_{7,0}^{d(0)}} \oplus \textcolor{blue}{W_{0,0}^{d(4)}} \oplus \textcolor{blue}{W_{3,1}^{d(5)}},&
 T_{7,2}^1=& \textcolor{red}{W_{7,1}^{d(0)}} \oplus \textcolor{red}{W_{0,2}^{d(1)}} \oplus \textcolor{blue}{W_{3,2}^{d(5)}}, \\  T_{8,1}^1=& \textcolor{red}{W_{8,0}^{d(7)}} \oplus \textcolor{blue}{W_{1,0}^{d(1)}} \oplus \textcolor{blue}{W_{4,1}^{d(2)}},  &  T_{8,2}^1=& \textcolor{red}{W_{8,1}^{d(7)}} \oplus \textcolor{red}{W_{1,2}^{d(8)}} \oplus \textcolor{blue}{W_{4,2}^{d(2)}},\\
T_{9,1}^1=& \textcolor{red}{W_{9,0}^{d(4)}} \oplus \textcolor{blue}{W_{2,0}^{d(8)}} \oplus \textcolor{blue}{W_{5,1}^{d(9)}},  &  T_{9,2}^1=& \textcolor{red}{W_{9,1}^{d(4)}} \oplus \textcolor{red}{W_{2,2}^{d(5)}} \oplus \textcolor{blue}{W_{5,2}^{d(9)}}. & 
\end{align*}}
	In the second iteration, each of the sub-files in $\{W_{(3\alpha+6) \text{ mod } 10}^{d(\alpha)},W_{(3\alpha+9) \text{ mod } 10}^{d(\alpha)}: \alpha \in [0,9] \} $ is split into $3$ blocks: 
$\{W_{(3\alpha+6) \text{ mod } 10,l}^{d(\alpha)},W_{(3\alpha+9) \text{ mod } 10,l}^{d(\alpha)}: \alpha \in [0,9], l \in [0,2] \} $. Here, each coded symbol consists of $3$ sub-files taken at an interval of $4$ in a cyclic wrap around way. The $0^{th}$ block of the first and the second sub-files and the $1^{st}$ block of the third sub-file are taken in each coded symbol $T_{j,1}^{1}$ while in each coded symbol $T_{j,2}^{1}$,  the $1^{st}$ block of the first sub-file, the $2^{nd}$ block of the second and the third sub-files are taken.   The following coded symbols are transmitted during the second iteration:
{\small
\begin{align*}
T_{0,1}^2=& \textcolor{red}{W_{0,0}^{d(8)}} \oplus \textcolor{blue}{W_{4,0}^{d(5)}} \oplus \textcolor{blue}{W_{8,1}^{d(3)}},  & T_{0,2}^2=& \textcolor{red}{W_{0,1}^{d(8)}} \oplus \textcolor{red}{W_{4,2}^{d(6)}} \oplus \textcolor{blue}{W_{8,2}^{d(3)}}, \\  T_{1,1}^2=& \textcolor{red}{W_{1,0}^{d(5)}} \oplus \textcolor{blue}{W_{5,0}^{d(2)}} \oplus \textcolor{blue}{W_{9,1}^{d(0)}}, &
T_{1,2}^2=& \textcolor{red}{W_{1,1}^{d(5)}} \oplus \textcolor{red}{W_{5,2}^{d(3)}} \oplus \textcolor{blue}{W_{9,2}^{d(0)}}, \\  T_{2,1}^2=& \textcolor{red}{W_{2,0}^{d(2)}} \oplus \textcolor{blue}{W_{6,0}^{d(9)}} \oplus \textcolor{blue}{W_{0,1}^{d(7)}}, & T_{2,2}^2=& \textcolor{red}{W_{2,1}^{d(2)}} \oplus \textcolor{red}{W_{6,2}^{d(0)}} \oplus \textcolor{blue}{W_{0,2}^{d(7)}}, \\  
T_{3,1}^2=& \textcolor{red}{W_{3,0}^{d(9)}} \oplus \textcolor{blue}{W_{7,0}^{d(6)}} \oplus \textcolor{blue}{W_{1,1}^{d(4)}},   & T_{3,2}^2=& \textcolor{red}{W_{3,1}^{d(9)}} \oplus \textcolor{red}{W_{7,2}^{d(7)}} \oplus \textcolor{blue}{W_{1,2}^{d(4)}}, \\	T_{4,1}^2=& \textcolor{red}{W_{4,0}^{d(6)} }\oplus \textcolor{blue}{W_{8,0}^{d(3)}} \oplus \textcolor{blue}{W_{2,1}^{d(1)}},&
T_{4,2}^2=& \textcolor{red}{W_{4,1}^{d(6)}} \oplus \textcolor{red}{W_{8,2}^{d(4)}} \oplus \textcolor{blue}{W_{2,2}^{d(1)}}, \\  T_{5,1}^2=& \textcolor{red}{W_{5,0}^{d(3)}} \oplus \textcolor{blue}{W_{9,0}^{d(0)}} \oplus \textcolor{blue}{W_{3,1}^{d(8)}} , & T_{5,2}^2=& \textcolor{red}{W_{5,1}^{d(3)}} \oplus \textcolor{red}{W_{9,2}^{d(1)}} \oplus \textcolor{blue}{W_{3,2}^{d(8)} }, \\ 	
T_{6,1}^2=& \textcolor{red}{W_{6,0}^{d(0)}} \oplus \textcolor{blue}{W_{0,0}^{d(7)}} \oplus \textcolor{blue}{W_{4,1}^{d(5)}},  & T_{6,2}^2=& \textcolor{red}{W_{6,1}^{d(0)}} \oplus \textcolor{red}{W_{0,2}^{d(8)}} \oplus \textcolor{blue}{W_{4,2}^{d(5)}}, \\  T_{7,1}^2=& \textcolor{red}{W_{7,0}^{d(7)}} \oplus \textcolor{blue}{W_{1,0}^{d(4)}} \oplus \textcolor{blue}{W_{5,1}^{d(2)}},&
T_{7,2}^2=& \textcolor{red}{W_{7,1}^{d(7)}} \oplus \textcolor{red}{W_{1,2}^{d(5)}} \oplus \textcolor{blue}{W_{5,2}^{d(2)}}, \\  T_{8,1}^2=& \textcolor{red}{W_{8,0}^{d(4)}} \oplus \textcolor{blue}{W_{2,0}^{d(1)}} \oplus \textcolor{blue}{W_{6,1}^{d(9)}},  &  T_{8,2}^2=& \textcolor{red}{W_{8,1}^{d(4)}} \oplus \textcolor{red}{W_{2,2}^{d(2)}} \oplus \textcolor{blue}{W_{6,2}^{d(9)}}, \\  
T_{9,1}^2=& \textcolor{red}{W_{9,0}^{d(1)}} \oplus \textcolor{blue}{W_{3,0}^{d(8)}} \oplus \textcolor{blue}{W_{7,1}^{d(6)}},  &  T_{9,2}^2=& \textcolor{red}{W_{9,1}^{d(1)}} \oplus \textcolor{red}{W_{3,2}^{d(9)}} \oplus \textcolor{blue}{W_{7,2}^{d(6)}}. &
\end{align*}}
	Now, each user $U_{\alpha}$ needs to recover the demanded file $W^{d(\alpha)}$ from the above transmissions. Let us first consider the user $U_0.$  The user $U_0 $ retrieves $W_{7,0}^{d(0)}$ from $T_{7,1}^1$ while it retrieves $W_{7,1}^{d(0)}$ and $W_{7,2}^{d(0)}$ from $T_{7,2}^1$ and $T_{4,2}^1$ respectively. Similarly, the user $U_0 $ retrieves $W_{8,0}^{d(0)}$, $W_{8,1}^{d(0)}$ and $W_{8,2}^{d(0)}$ from $T_{5,1}^1, T_{2,1}^1$ and $T_{2,2}^1$ respectively. The user $U_0 $ gets $W_{6,0}^{d(0)}$, $W_{6,1}^{d(0)}$ and $W_{6,2}^{d(0)}$ from $T_{6,1}^2, T_{6,2}^2$ and $T_{2,2}^2$ respectively while it gets $W_{9,0}^{d(0)}$, $W_{9,1}^{d(0)}$ and $W_{9,2}^{d(0)}$ from $T_{5,1}^2, T_{1,1}^2$ and $T_{1,2}^2$ respectively. The user $U_0$ has decoded the file $W^{d(0)}$ since it has retrieved all the sub-files corresponding to the file $W^{d(0)}$. Similarly all other users can decode their demanded file. 
	In this example, the transmission rate achieved using our scheme is $\mathcal{R}_{new}\left ( \frac{3}{10} \right )  = \frac{4}{3}$ while $\mathcal{R}_{RK}\left ( \frac{3}{10} \right )  =1.6$ in  \cite{nikhil2020ratememory}.
\end{example}
\subsection{On the lower convex envelope of the achievable rates}
In Fig. 3(b), transmission rate vs $\gamma$  plot is obtained for $K=25,z =3$ for all the points mentioned in Theorem \ref{thm1} along with the lower convex envelop of all the points mentioned in Theorem \ref{thm1} omitting $\gamma=\frac{2}{25},\frac{3}{25}$ since the line joining the points $\gamma=\frac{1}{25}$ and $\gamma=\frac{4}{25}$ falls below those two points. In general we conjecture that points corresponding to $kz<(\frac{K-1}{2})$ except for $kz=z$  will have this characteristic. This may be mainly due to taking the ceiling operation of certain values in the rate expression. However, irrespective of this nature of some points all the points mentioned in Theorem \ref{thm1} fall below the curve obtained for the scheme in \cite{nikhil2020ratememory}.
\subsection{Sub-packetization Level}
For any $\gamma \in \left \{\frac{k}{K}: \gcd(k,K)=1,  k\in \left [1,K \right ]\right \}$, Algorithm \ref{algo2} is used to derive the transmissions done by the server. In  Algorithm \ref{algo2}, the worst case sub-packetization level is when $p$ is odd. In that case each sub-file is divided further into $p$ parts, where $p=\lceil \frac{kz}{r} \rceil+1$. 
To maximize $p,$ the maximum value that $kz$ can take needs to be chosen, which is when $kz=K-1$. Hence the maximum value that $p$ can take is $K$, i.e., each sub-file is further divided into at most $K$ parts in the worst case scenario. Thus, the worst case sub-packetization level in our scheme is $K^2.$ The sub-packetization level required for the scheme proposed in \cite{nikhil2020ratememory} is ${K-kz+k-1 \choose k-1} \frac{K}{k}$ while the sub-packetization level required for the scheme in \cite{Caire2020noveltransformation} is ${K-kz+k \choose k} K$. So, the sub-packetization level required for our scheme is less compared to both these schemes. The sub-packetization level required for the scheme proposed in \cite{parinello2019multiacessgains} is $K$, which is less than the sub-packetization level required for our scheme.
\section{Discussion}
\label{sec4}
In this work, we have presented a placement and delivery scheme for multi-access coded caching problem, under the restriction of uncoded placement, with each cache having a normalized capacity of $ \gamma$, where $\gamma \in \left \{\frac{k}{K}: \gcd(k,K)=1,  k\in \left [1,K \right ]\right \}$. We have shown that our work is  a generalization of one of the cases considered in \cite{parinello2019multiacessgains}. We have also proved that our scheme outperforms that in \cite{nikhil2020ratememory} for the cases under consideration. Here, we assume that each user has access to same number of caches and each cache is of same capacity which need not be true in practical scenarios. Hence, it is a good direction to work on when the cache sizes are heterogeneous and each user has access to random number of users.

    In \cite{KMR2020CRD}, the authors have identified a special class of resolvable designs called cross resolvable designs which led to multi-access coded caching schemes but with the number of users being different from the number of caches. 

 
\begin{center}
	APPENDIX A
	
	Proof of Correctness of the Delivery Scheme
\end{center}
Recall that, for each user $U_{\alpha} , \alpha\in [0,K-1]$, the  accessible cache content is $\{W_{(k \alpha + i) \text{ mod } K}^{n} : i=0,1,\dots,kz-1,  n \in [0,N-1]\}$. Hence, in order to retrieve the file $W^{d(\alpha)}$,  the user $U_{\alpha}$ needs to decode all the remaining $K-kz$ sub-files, i.e.,  $\{W_{k \alpha+kz+i \text{ mod }K }^{d(\alpha)} : i \in [0,K-k z-1]\}$.
We provide the proof of decodability for the delivery scheme presented in Algorithm \ref{algo2} depending upon whether the value of  $K-kz$ is even or odd separately in the next two subsections.
\subsection{Proof of Correctness of Algorithm \ref{algo2} when $K-kz$ is even} 
\label{proog algo2}

If $K-kz$ is even, out of the $K-kz$ sub-files which the user $U_{\alpha}$ needs to retrieve, we denote the first $\frac{K-kz}{2}$ consecutive sub-files, neighboring to the sub-files available with the user $U_{\alpha}$, as the set $\mathcal{P}_{\alpha}$, while the next $\frac{K-kz}{2}$ sub-files are denoted by the set $\mathcal{Q}_{\alpha}$, i.e.,
{\small
\begin{align}
\mathcal{P}_{\alpha}=& \left \{W_{(k \alpha+kz+i) \text{ mod } K}^{d(\alpha)} : i \in \left [0,\frac{K-kz}{2}-1 \right ]\right \} \label{eq even Palpha} \\ \mathcal{Q}_{\alpha} =& \left \{W_{(k\alpha+kz+i) \text{ mod } K}^{d(\alpha)} : i \in \left [\frac{K-kz}{2},K-k z-1\right ]\right \}.\label{eq even Qalpha}
\end{align}}
Hence, the user $U_{\alpha}$ needs to decode  all the sub-files in the set $\mathcal{P}_{\alpha} \cup \mathcal{Q}_{\alpha}.$ 
We rewrite the set $\mathcal{P}_{\alpha}$ as
in (\ref{eq; p alpha}), by changing the variable in the subscript of each sub-file from $i$ to $r=K-kz-i$. This is done to relate these sub-files to the ones present in the coded symbols in Algorithm \ref{algo2}.
{\small
\begin{align} \label{eq; p alpha}
\mathcal{P}_{\alpha} &= \left \{W_{(k \alpha+kz+K-kz-r) \text{ mod } K}^{d(\alpha)} : r \in \left [\frac{K-kz}{2}+1,K-kz \right ]\right \} \nonumber\\
&= \left \{W_{(k \alpha-r) \text{ mod } K}^{d(\alpha)} : r \in  \left [\frac{K-kz}{2}+1,K-kz \right ]\right \}.
\end{align}}
Similarly, we rewrite the set $\mathcal{Q}_{\alpha}$,
by changing the variable from $i$ to $r=i+1$, as
{\small
\begin{align} \label{ eq q alpha}
\mathcal{Q}_{\alpha} =  \left \{W_{(k \alpha+kz+r-1) \text{ mod } K}^{d(\alpha)}: r \in \left [\frac{K-kz}{2}+1,K-k z\right ]\right \}.
\end{align}}
We prove that  the user $U_{\alpha}$ can decode the sub-files in $\mathcal{P}_{\alpha}$ and $\mathcal{Q}_{\alpha}$ separately in Lemmas \ref{lem1} and \ref{lem2} respectively.

\begin{lem}
	\label{lem1}
	Each user $U_{\alpha}, \alpha \in [0,K-1]$, can decode all the sub-files in $\mathcal{P}_{\alpha}$ given by $(\ref{eq; p alpha})$, using the transmissions in Algorithm \ref{algo2}.
\end{lem}
\begin{proof}
	We need to prove that each user $U_{\alpha}$ can decode the sub-file $W_{(k \alpha-r) \text{ mod }K}^{d(\alpha)}$,  for each $r \in \left [\frac{K-kz}{2}+1 , K-kz \right ]$. We sub-divide this case further into two parts depending upon whether the value of $\lceil \frac{k z}{r} \rceil$ is odd or even. That is we divide the set $\mathcal{P}_{\alpha}$ into two disjoint subsets, $\mathcal{P}_{\alpha_1}$ and $\mathcal{P}_{\alpha_2}$, as in (\ref{eqn P1})  and (\ref{eqn P2}) respectively.
	\begin{strip}
	{\small
	\begin{align}
	\mathcal{P}_{\alpha_1} &= \left \{W_{(k \alpha-r) \text{ mod }K}^{d(\alpha)} : r \in  \left [\frac{K-kz}{2}+1, K-kz \right ],  \left \lceil \frac{k z}{r} \right \rceil  \text{ is odd } \right \} \label{eqn P1}\\ \mathcal{P}_{\alpha_2} &= \left \{W_{(k \alpha-r) \text{ mod }K}^{d(\alpha)} : r \in  \left [\frac{K-kz}{2}+1, K-kz \right ], \left  \lceil \frac{k z}{r} \right  \rceil  \text{ is even } \right \}.\label{eqn P2}
	\end{align}	 }
\end{strip}
	We prove that  the user $U_{\alpha}$ can decode all the sub-files in $\mathcal{P}_{\alpha_1}$ and $\mathcal{P}_{\alpha_2}$ separately in {\bf{Part 1}} and {\bf{Part 2}} respectively.
		
	\textit{{\bf{Part 1}}: Consider sub-files in $\mathcal{P}_{\alpha_1}$.}\\
	%
	We first fix some $r$, take the corresponding sub-file $W_{(k \alpha-r) \text{ mod }K}^{d(\alpha)}$ in $\mathcal{P}_{\alpha_1}$ and the corresponding value $p= 1+ \lceil \frac{k z}{r} \rceil $ in line 2 in Algorithm \ref{algo2}.
	Since $p$ is even, we need to prove that the user $U_{\alpha}$ can decode the sub-file $W_{(k \alpha-r) \text{ mod }K}^{d(\alpha)}$ by proving that the user can retrieve all the $\frac{p}{2}$ blocks, $\{W_{(k \alpha-r) \text{ mod }K,l}^{d(\alpha)}: l \in[0,\frac{p}{2}-1]\}$, of the sub-file $W_{(k \alpha-r) \text{ mod }K}^{d(\alpha)}$ (see line 11 in Algorithm \ref{algo2}).
	
	We prove that for each $l \in \left [0,\frac{p}{2}-1 \right ]$, the user can decode the block $W_{(k \alpha-r) \text{ mod }K,l}^{d(\alpha)}$ from the coded symbol $T_{(k \alpha-(l+1)r) \text{ mod }K}^{i}$ transmitted during the iteration $i=\left (r-\frac{K-kz}{2} \right )$. From line 13 in Algorithm \ref{algo2}, we have  
	{\small
	\begin{align} \label{eqn lemma 1 part 1}
	T_{(k \alpha-(l+1)r) \text{ mod }K}^{i}= \left (\bigoplus_{m \in \left [0,\frac{p}{2}-1 \right ]} W_{(k \alpha-(l+1-m)r) \text{ mod }K,m  }^{d(\pi^{-1}_{{i,1}}((k \alpha-(l+1-m)r) \text{ mod }K))} \right )
	\nonumber \\ \bigoplus
	\left (  \bigoplus_{m \in \left [\frac{p}{2},p-1\right ]} W_{(k \alpha-(l+1-m)r) \text{ mod }K,m-\frac{p}{2} }^{d(\pi^{-1}_{{i,2}}((k \alpha-(l+1-m)r) \text{ mod }K))} \right ).
	\end{align}	}
	The sub-file corresponding to $m=l$ in (\ref{eqn lemma 1 part 1}) is $W_{(k \alpha-r) \text{ mod }K}^{d(\pi_{i,1}^{-1}((k \alpha-r) \text{ mod }K))}$.
	Since $\pi_{i,1}^{-1}((k \alpha-r) \text{ mod }K) =\alpha$, the  sub-file corresponding to $m=l$ is $W_{(k \alpha-r) \text{ mod }K}^{d(\alpha )}$.
	Also, the $l^{th} $ block of the sub-file $W_{(k \alpha-r) \text{ mod }K}^{d(\alpha )}$ is present in (\ref{eqn lemma 1 part 1}). 
	
	In (\ref{eqn lemma 1 part 1}), note that the subscripts of the $p$ sub-files corresponding to $m \in \left [0, p-1 \right]$, belong to the set $\{(k\alpha -(l+1-m)r) \text{ mod } K :  m \in \left [0, p-1 \right] \} $. We need to show that all the sub-files corresponding to $m \in [0,p-1] \backslash l$, i.e., all the sub-files with the  subscripts in $\{(k\alpha -(l+1-m)r) \text{ mod } K :  m \in \left [0, p-1 \right] \backslash l\} $,  are available with the user $U_{\alpha}$. Equivalently, we need to prove that 
	
	{\small
	\begin{align} \label{eqn subscript}
	\{(k\alpha -(l+1-m)r ) \text{ mod } K:  m \in \left [0, p-1 \right] \backslash l\} \subseteq \nonumber \\ [k\alpha \text{ mod } K, (k\alpha +kz-1) \text{ mod } K].
	\end{align}}
	To prove this, we rewrite the LHS of (\ref{eqn subscript}) as
	{\small
	\begin{align}
	\underbrace{  \{(k\alpha -(l+1-m)r ) \text{ mod } K:  m \in \left [0, l-1 \right]\}}_{\mathcal{S}_1} \cup\nonumber \\\underbrace{ \{(k\alpha +(m-(l+1))r) \text{ mod } K :  m \in \left [ l+1,p-1 \right]\} }_{\mathcal{S}_2}.
	\end{align}}	
	We need to prove that both $\mathcal{S}_1$ and $\mathcal{S}_2$ belong to the RHS of (\ref{eqn subscript}).
	First, we consider the elements in the set $\mathcal{S}_1$.
	The minimum value of $m$ in the set $\mathcal{S}_1$ is $0$ and the maximum value is $l-1$. The element corresponding to $m=0$ in the set $\mathcal{S}_1$ is $(k \alpha -(l+1)r \text{ mod } K)$.
	The maximum value of $l$ is $\frac{p}{2}-1$ and the maximum value of $r$ is $K-kz$. Hence, 
	{\small
	\begin{align}
	(k \alpha -(l+1)r) \text{ mod } K =(k \alpha -lr-r) \text{ mod } K \nonumber\\\geq \left (k \alpha -\left (\frac{p}{2}-1 \right )r -K+kz \right ) \text{ mod } K \nonumber\\=\left (k \alpha +kz - \left (\frac{p}{2}-1 \right )r \right ) \text{ mod } K.\label{proof cache1}
	\end{align}}	
	Since $p = \lceil \frac{kz}{r} \rceil +1 $, we know that $(p-2)r <kz$. Therefore, $(\frac{p}{2}-1)r < kz$ and hence, $(k \alpha -(l+1)r) \text{ mod } K  \in  [k\alpha \text{ mod } K, (k\alpha +kz-1) \text{ mod } K]$. 
	
	The element corresponding to $m=l-1$ is $((k \alpha - 2r ) \text{ mod } K)$. 
	Since $r \geq \frac{K-kz}{2} +1 $, we know that $2r > K-kz$. Therefore, 
	{\small
	\begin{align} \label{proof cache2 }
	((k \alpha -2r ) \text{ mod } K)  &< ((k \alpha -K+kz ) \text{ mod } K)\\&= ((k \alpha +kz ) \text{ mod } K).     
	  \end{align}}
	Hence, $((k \alpha -2r ) \text{ mod } K) \in [k\alpha \text{ mod } K, (k\alpha +kz-1) \text{ mod } K]$. 
	
	Note that the elements in the set $\mathcal{S}_1$ are in the increasing order with $m$. Hence all the elements in $\mathcal{S}_1$ belong to the set $[ (k \alpha -(l+1)r) \text{ mod } K, (k \alpha -2r) \text{ mod } K]$. From (\ref{proof cache1}) and (\ref{proof cache2 }),  all the elements in $\mathcal{S}_1$ belong to the set $[\left (k \alpha +kz - \left (\frac{p}{2}-1 \right )r \right ) \text{ mod } K,(k \alpha +kz -1) \text{ mod } K]$.
	Therefore, $\mathcal{S}_1 \subseteq [k\alpha \text{ mod } K, (k\alpha +kz-1) \text{ mod } K]$.
		
	Now, consider the elements in the set $\mathcal{S}_2 $. The minimum value of $m$ in the set $\mathcal{S}_2$ is $l+1$ and the maximum value is $p-1$. The element corresponding to $m=l+1$ in the set $\mathcal{S}_2$ is $(k \alpha \text{ mod } K)$ while the element corresponding to $m=p-1$ is $((k \alpha + (p-2-l)r ) \text{ mod } K)$. Since $p = \lceil \frac{kz}{r} \rceil +1 $, we know that $(p-2)r <kz$. Therefore, $(p-2-l)r <kz$ and hence,  $((k \alpha + (p-2-l)r ) \text{ mod } K) \in [k\alpha \text{ mod } K, (k\alpha +kz-1) \text{ mod } K]$. The elements in the set $\mathcal{S}_2$ are in the increasing order with $m$ and the elements corresponding to the minimum as well as the maximum values of $m$ belong to the RHS of (\ref{eqn subscript}). Therefore, $\mathcal{S}_2 \subseteq [k\alpha \text{ mod } K, (k\alpha +kz-1) \text{ mod } K]$.
	
	So, the user can retrieve the block $W_{(k \alpha-r) \text{ mod }K,l}^{d(\alpha)}$ from the coded symbol $T_{(k \alpha-(l+1)r) \text{ mod }K}^{i}$ as shown in (\ref{eqn lemma 1 part 1 cache}) and (\ref{eqn lemma 1 part 1 cache1}).
	\begin{strip}
	{\small	
	\begin{align}
	T_{(k \alpha-(l+1)r) \text{ mod }K}^{i}=&  \left (W_{(k \alpha-r) \text{ mod }K,l}^{d(\pi_{i,1}^{-1}((k \alpha-r) \text{ mod }K) )}  \right ) \bigoplus \left (  \bigoplus_{m \in \left [0,\frac{p}{2}-1 \right ] \backslash l} W_{(k \alpha-(l+1-m)r) \text{ mod }K,m  }^{d(\pi^{-1}_{{i,1}}((k \alpha-(l+1-m)r) \text{ mod }K))} \right )\nonumber \\ &\hspace{4.6cm} \bigoplus \left ( \bigoplus_{m \in \left [\frac{p}{2},p-1\right ]} W_{(k \alpha-(l+1-m)r) \text{ mod }K,m-\frac{p}{2} }^{d(\pi^{-1}_{{i,2}}((k \alpha-(l+1-m)r) \text{ mod }K))} \right ) \label{eqn lemma 1 part 1 cache}
	\\
	=& \left (W_{(k \alpha-r) \text{ mod }K,l}^{d(\alpha)} \right ) \bigoplus  \underbrace{\left (   \bigoplus_{m \in \left [0,\frac{p}{2}-1 \right ] \backslash l} W_{(k \alpha-(l+1-m)r) \text{ mod }K,m  }^{d(\pi^{-1}_{{i,1}}((k \alpha-(l+1-m)r) \text{ mod }K))} \right )}_{\text{available at cache}} \bigoplus \nonumber\\& \hspace{3.8cm} \underbrace{\left (   \bigoplus_{m \in \left [\frac{p}{2},p-1\right ]} W_{(k \alpha-(l+1-m)r) \text{ mod }K,m-\frac{p}{2} }^{d(\pi^{-1}_{{i,2}}((k \alpha-(l+1-m)r) \text{ mod }K))}\right )}_{\text{available at cache}} 
	\label{eqn lemma 1 part 1 cache1}
	\end{align}}
\end{strip}
	\textit{{\bf{Part 2:}} Consider sub-files in $\mathcal{P}_{\alpha_2} $.}\\
	
	In this case also, we first fix some $r$, take the corresponding sub-file $W_{(k \alpha-r) \text{ mod }K}^{d(\alpha)}$ in $\mathcal{P}_{\alpha_2}$ and the corresponding $p= 1+ \lceil \frac{k z}{r} \rceil $. Since $p$ is odd, the user $U_{\alpha}$ needs to retrieve all the $p$ blocks, $\{W_{(k \alpha-r) \text{ mod }K,l}^{d(\alpha)}: l \in[0,p-1]\}$, of the sub-file $W_{(k \alpha-r) \text{ mod }K}^{d(\alpha)}$ in order to decode that sub-file.  
	
	We prove that  for each $l \in \left [0,\frac{p-3}{2} \right ]$, the user can decode $W_{(k \alpha-r) \text{ mod }K,l}^{d(\alpha)}$ from the coded symbol $T_{(k \alpha-(l+1)r) \text{ mod }K,1}^{i}$ transmitted during the iteration $i=r-\frac{K-kz}{2}$. Since, $K-kz$ is even and $p$ is odd, from line 18 in Algorithm \ref{algo2}, we have (\ref{eqn lemma 1 part 2 T1}).
	
	The sub-file corresponding to $m=l$ in (\ref{eqn lemma 1 part 2 T1}) is $W_{(k \alpha-r) \text{ mod }K}^{d(\pi_{i,1}^{-1}((k \alpha-r) \text{ mod }K))}$.
	Since $\pi_{i,1}^{-1}((k \alpha-r) \text{ mod }K) =\alpha$, we have $W_{(k \alpha-r) \text{ mod }K}^{d(\pi_{i,1}^{-1}((k \alpha-r) \text{ mod }K))} =W_{(k \alpha-r) \text{ mod }K}^{d(\alpha )}$.
	Also, the $l^{th} $ block of the sub-file $W_{(k \alpha-r) \text{ mod }K}^{d(\alpha )}$ is present in (\ref{eqn lemma 1 part 2 T1}).   Here also, all other sub-files in (\ref{eqn lemma 1 part 2 T1}) are available at the cache of User $U_{\alpha}$, the proof of which is similar to the one provided  in {\textbf{ Part 1}}. So, the user can retrieve the block  $W_{(k \alpha-r) \text{ mod }K,l}^{d(\alpha )}$ from the coded symbol $T_{(k \alpha-(l+1)r) \text{ mod }K,1}^{i}$ as shown in (\ref{eqn lemma 1 part 2 T1 cache}).
	\begin{strip}
	{\small
	\begin{align} 
	T_{(k \alpha-(l+1)r) \text{ mod }K,1}^{i}=&\left ( \bigoplus_{m \in \left [0,\frac{p-3}{2}\right  ]} W_{(k \alpha-(l+1-m)r) \text{ mod }K,m  }^{d(\pi^{-1}_{{i,1}}((k \alpha-(l+1-m)r) \text{ mod }K))} \right ) \bigoplus \nonumber\\& \hspace{1cm} \left ( \bigoplus_{m \in \left [\frac{p-1}{2},p-1\right ]} W_{(k \alpha-(l+1-m)r) \text{ mod }K,m-\frac{p-1}{2} }^{d(\pi^{-1}_{{i,2}}((k \alpha-(l+1-m)r) \text{ mod }K))} \right ) \label{eqn lemma 1 part 2 T1} \\
	=& \left (W_{(k \alpha-r) \text{ mod }K,l}^{d(\alpha)} \right ) \bigoplus \underbrace{ \left ( \bigoplus_{m \in \left [0,\frac{p-3}{2}\right ]\backslash l } W_{(k \alpha-(l+1-m)r) \text{ mod }K,m  }^{d(\pi^{-1}_{{i,1}}((k \alpha-(l+1-m)r) \text{ mod }K))} \right )}_{\text{available at cache}} \bigoplus \nonumber\\&
	\hspace{3.7cm}
	\underbrace{ \left (  \bigoplus_{m \in \left [\frac{p-1}{2},p-1\right ]}  W_{(k \alpha-(l+1-m)r) \text{ mod }K,m-\frac{p-1}{2} }^{d(\pi^{-1}_{{i,2}}((k \alpha-(l+1-m)r) \text{ mod }K))} \right )}_{\text{available at cache}} 
	\label{eqn lemma 1 part 2 T1 cache}
	\end{align}}
\end{strip}
	Now, we prove that for each $l \in \left [\frac{p-1}{2},p-1\right ]$, the user can decode $W_{(k \alpha-r) \text{ mod }K,l}^{d(\alpha)}$ from the coded symbol  $T_{(k \alpha-\left ( l-\frac{p-1}{2} +1\right )r) \text{ mod }K,2}^{i}$ transmitted during the iteration $i=r-\frac{K-kz}{2}$, from line 18 in Algorithm \ref{algo2}.
	Here, the sub-file corresponding to $m= (l-\frac{p}{2})$ in (\ref{eqn lemma 1 part 2 T2}) is $W_{(k \alpha-r) \text{ mod }K}^{d(\pi_{i,1}^{-1}((k \alpha-r) \text{ mod }K))}$.
	Since $\pi_{i,1}^{-1}((k \alpha-r) \text{ mod }K) =\alpha$, we have $W_{(k \alpha-r) \text{ mod }K}^{d(\pi_{i,1}^{-1}((k \alpha-r) \text{ mod }K))} =W_{(k \alpha-r) \text{ mod }K}^{d(\alpha )}$.
	Also, the $l^{th} $ block of the sub-file $W_{(k \alpha-r) \text{ mod }K}^{d(\alpha )}$ is present in (\ref{eqn lemma 1 part 2 T2}).   
	All other sub-files in (\ref{eqn lemma 1 part 2 T2}) are available at the cache of User $U_{\alpha}$, the proof of which is similar to that provided in {\textbf{ Part 1}}. 
	So, the user can retrieve the block  $W_{(k \alpha-r) \text{ mod }K,l}^{d(\alpha )}$ from the coded symbol $T_{(k \alpha-\left ( l-\frac{p-1}{2} +1\right )r) \text{ mod }K,2}^{i}$ as shown in (\ref{eqn lemma 1 part 2 T2 cache}).
	\begin{strip}
	{\small
	\begin{align}
	T_{(k \alpha-\left ( l-\frac{p-1}{2} +1\right )r) \text{ mod }K,2}^{i}=&\left ( \bigoplus_{m \in \left [0,\frac{p-1}{2} \right ]} W_{(k \alpha-\left ( l-\frac{p-1}{2} +1-m\right )r) \text{ mod }K,m +\frac{p-1}{2} }^{d(\pi^{-1}_{i,1} ( (k \alpha-\left ( l-\frac{p-1}{2} +1-m\right )r) \text{ mod }K))} \right ) \bigoplus \nonumber \\
	&\hspace{1cm} \left ( \bigoplus_{m \in \left [\frac{p+1}{2},p-1\right ]} W_{(k \alpha-\left ( l-\frac{p-1}{2} + 1-m\right )r) \text{ mod }K,m }^{d(\pi^{-1}_{i,2} ( (k \alpha-\left ( l-\frac{p-1}{2} +1-m\right )r) \text{ mod }K))} \right ) \label{eqn lemma 1 part 2 T2}\\
	=&W_{(k \alpha-r) \text{ mod }K,l}^{d(\alpha)}  \bigoplus \underbrace{ \left (  \bigoplus_{m  \in \left [0,\frac{p-1}{2} \right ] \backslash (l-\frac{p-1}{2})} W_{(k \alpha-\left ( l-\frac{p-1}{2} +1-m\right )r) \text{ mod }K,m }^{d(\pi^{-1}_{i,1} ((k \alpha-\left ( l-\frac{p-1}{2} +1-m\right )r) \text{ mod }K ))} \right
		)}_{\text{available at cache}} \nonumber\\
	&\hspace{2.3cm} \bigoplus \underbrace{ \left (   \bigoplus_{m \in \left [\frac{p+1}{2},p-1 \right ]} W_{(k \alpha-\left ( l-\frac{p-1}{2} +1-m\right )r) \text{ mod }K,m }^{d(\pi^{-1}_{i,2} ((k \alpha-\left ( l-\frac{p-1}{2} +1-m\right )r) \text{ mod }K ))}\right )}_{\text{available at cache}} 
	\label{eqn lemma 1 part 2 T2 cache}
	\end{align}}	
\end{strip}
	In short, the user $U_{\alpha} $ can decode the sub-file $W_{(k \alpha-r) \text{ mod }K}^{d(\alpha)} $, since it has retrieved all the blocks of sub-files corresponding to $W_{(k \alpha-r) \text{ mod }K}^{d(\alpha)} $. This completes the proof for the set $\mathcal{P}_{\alpha}$.
\end{proof}
\begin{lem}
	\label{lem2}
	Each user $U_{\alpha}, \alpha \in \{0,1,\ldots ,\alpha-1\}$, can decode all the sub-files  in $\mathcal{Q}_{\alpha} $ as given in $(\ref{ eq q alpha})$ using the transmissions in Algorithm \ref{algo2}.
\end{lem}
\begin{proof}
	We need to prove that each user $U_{\alpha}$ can decode  the sub-file $W_{ (k \alpha+kz+r-1) \text{ mod }K}^{d(\alpha)}$, for each
	$ r \in\left [\frac{K-k z}{2}+1,K-kz \right ] $.
	In this case also, we sub-divide this case further into two parts depending upon whether the value of $\lceil \frac{k z}{r} \rceil$ is odd or even. We divide the set $\mathcal{Q}_{\alpha}$ into two disjoint subsets, $\mathcal{Q}_{\alpha_1}$ and $\mathcal{Q}_{\alpha_2}$, as in (\ref{eqn Q1}) and (\ref{eqn Q2}) respectively.
	\begin{strip}
		\begin{equation}\label{eqn Q1}
	\mathcal{Q}_{\alpha_1} = \left \{W_{ (k\alpha+kz+r-1) \text{ mod }K}^{d(\alpha)} : r \in\left [\frac{K-k z}{2}+1,K-kz \right ] , \lceil \frac{k z}{r} \rceil  \text{ is odd } \right \}
		\end{equation}
		\begin{equation}\label{eqn Q2}
	\mathcal{Q}_{\alpha_2} = \left \{W_{ (k\alpha+kz+r-1) \text{ mod }K}^{d(\alpha)} : r \in\left [\frac{K-k z}{2}+1,K-kz \right ] , \lceil \frac{k z}{r} \rceil  \text{ is even } \right \}
		\end{equation}
		\end{strip}
	We prove that  the $U_{\alpha}$ can decode all the sub-files in $\mathcal{Q}_{\alpha_1}$ and $\mathcal{Q}_{\alpha_2}$ separately in {\bf{Part 3}} and {\bf{Part 4}} respectively.
	
	\textit{{\bf{Part 3}}: Consider sub-files in $\mathcal{Q}_{\alpha_1} $.}\\
	As in {\bf{Part 1}}, in this case also we first fix some $r$, take the corresponding sub-file $W_{ (k \alpha+kz+r-1) \text{ mod }K}^{d(\alpha)}$ in $\mathcal{Q}_{\alpha_1}$ and the corresponding $p= 1+ \lceil \frac{k z}{r} \rceil $. Since $p$ is even, the user $U_{\alpha}$ needs to retrieve all the $\frac{p}{2}$ blocks, $\{W_{ (k \alpha+kz+r-1) \text{ mod }K,l}^{d(\alpha)}: l \in[0,\frac{p}{2}-1]\}$, of the sub-file $W_{ (k \alpha+kz+r-1) \text{ mod }K}^{d(\alpha)}$ in order to decode that sub-file.  
	
	We prove that for each $l \in \left [0,\frac{p}{2}-1 \right ]$, the user can decode $W_{ (k \alpha+kz+r-1) \text{ mod }K}^{d(\alpha)}$ from the coded symbol $T_{j  }^{i}$, where $j=(k\alpha+ kz+r-1-\left (l+\frac{p}{2}\right )r) \text{ mod }K$, transmitted during the iteration $i=r-\frac{K-kz}{2}$. From line 13 in Algorithm \ref{algo2}, we have,
	\begin{align}
	T_j^{i} = \left (\bigoplus_{m =0}^{\frac{p}{2}-1} W_{(mr+j) \text{ mod } K,m}^{d\left(\pi^{-1}_{{i,1}}((mr+j) \text{ mod } K)\right) }  \right ) 
	\bigoplus \\\hspace{1cm} \left (\bigoplus_{m =\frac{p}{2}}^{p-1} W_{(mr+j) \text{ mod } K,m-\frac{p}{2}}^{d\left(\pi^{-1}_{{i,2}}((mr+j) \text{ mod } K)\right) }  \right ).
	\label{eqn lemma 2 part 3}
	\end{align}
	The sub-file corresponding to $m=l+\frac{p}{2}$ in (\ref{eqn lemma 2 part 3}) is $W_{(k \alpha+kz+r-1) \text{ mod }K}^{d(\pi_{i,2}^{-1}((k \alpha+kz+r-1) \text{ mod }K))}$.
	Since $\pi_{i,2}^{-1}((k \alpha+kz+r-1) \text{ mod }K) =\alpha$, we have $W_{(k \alpha+kz+r-1) \text{ mod }K}^{d(\pi_{i,2}^{-1}((k \alpha+kz+r-1) \text{ mod }K))} =W_{(k \alpha+kz+r-1) \text{ mod }K}^{d(\alpha )}$.
	Also, the $l^{th} $ block of the sub-file $W_{(k \alpha+kz+r-1) \text{ mod }K}^{d(\alpha )}$ is present in (\ref{eqn lemma 2 part 3}).   Here also, the user can retrieve the block  $W_{(k \alpha+kz+r-1) \text{ mod }K,l}^{d(\alpha )}$ from the coded symbol $T_{j}^{i}$ since, all other sub-files in (\ref{eqn lemma 2 part 3}) are available at its cache (for the same reason stated for the case discussed in {\textbf{ Part 1}}).	
	
	\textit{{\bf{Part 4:}} Consider sub-files in $\mathcal{Q}_{\alpha_2}$.}\\
	
	We first fix some $r$, take the corresponding sub-file $W_{ (k \alpha+kz+r-1) \text{ mod }K}^{d(\alpha)}$ in $\mathcal{Q}_{\alpha_2}$ and the corresponding $p= 1+ \lceil \frac{k z}{r} \rceil $. So the user $U_{\alpha}$ needs to retrieve all the $p$ blocks, $\{W_{ (k \alpha+kz+r-1) \text{ mod }K,l}^{d(\alpha)}: l \in[0,p-1]\}$, of the sub-file $W_{ (k \alpha+kz+r-1) \text{ mod }K}^{d(\alpha)}$ in order to decode that sub-file.  
	
	We prove that  for each $l \in \left [0,\frac{p-1}{2} \right ]$, the user can decode $W_{ (k \alpha+kz+r-1) \text{ mod }K}^{d(\alpha)}$ from the coded symbol  $T_{j  ,1}^{i}$, where $j=((k \alpha+ kz+r-1)-\left (l+\frac{p-1}{2}\right )r) \text{ mod }K$, transmitted during the iteration $i=r-\frac{K-kz}{2}$. From  line 18 in Algorithm \ref{algo2}, we have
	{\small	
	\begin{align}
	T_{j,1}^{i} = \left (\bigoplus_{m=0}^{\frac{p-3}{2}} W_{(mr+j) \text{ mod } K,m}^{d\left (\pi^{-1}_{{i,1}}((mr+j) \text{ mod } K)\right) } \right ) \bigoplus\\\hspace{1cm} \left (
	\bigoplus_{m= \frac{p-1}{2}}^{p-1} W_{(mr+j) \text{ mod } K,m-\frac{p-1}{2}}^{d \left (\pi^{-1}_{{i,2}}((mr+j) \text{ mod } K)\right)  } \right )
	\label{eqn lemma 2 part 4 T1}
	\end{align}}	
	The sub-file corresponding to $m=l+\frac{p-1}{2}$ in (\ref{eqn lemma 2 part 4 T1}) is $W_{(k \alpha+kz+r-1) \text{ mod }K}^{d(\pi_{i,2}^{-1}((k \alpha+kz+r-1) \text{ mod }K))}$.
	Since $\pi_{i,2}^{-1}((k \alpha+kz+r-1) \text{ mod }K) =\alpha$, we have $W_{(k \alpha+kz+r-1) \text{ mod }K}^{d(\pi_{i,2}^{-1}((k \alpha+kz+r-1) \text{ mod }K))} =W_{(k \alpha+kz+r-1) \text{ mod }K}^{d(\alpha )}$.
	Also, the $l^{th} $ block of the sub-file $W_{(k \alpha+kz+r-1) \text{ mod }K}^{d(\alpha )}$ is present in (\ref{eqn lemma 2 part 4 T1}).  For the same reason we had stated for the case discussed in {\textbf{ Part 1}}, the user $U_{\alpha}$ can retrieve the block  $W_{(k \alpha+kz+r-1) \text{ mod }K,l}^{d(\alpha )}$ from the coded symbol $T_{j,1}^{i}$ since, all other sub-files in (\ref{eqn lemma 2 part 4 T1}) are available at its cache.
	
	Now, we prove that for each $l \in \left [\frac{p+1}{2},p-1\right ]$, the user can decode $W_{ (k \alpha+kz+r-1 ) \text{ mod }K,l}^{d(\alpha)}$ from the coded symbol  $T_{j ,2}^{i}$, where $j=(k\alpha+ kz-1-(l-1)r) \text{ mod } K$, transmitted during the iteration $i=r-\frac{K-kz}{2}$. From  line 18 in Algorithm \ref{algo2}, we have
	{\small
	\begin{align}
	T_{j,2}^{i} = \left (\bigoplus_{m =0}^{\frac{p-1}{2}} W_{(mr+j) \text{ mod } K,\frac{p-1}{2}+m}^{d\left(\pi^{-1}_{{i,1}}((mr+j) \text{ mod } K)\right) } \right ) \bigoplus \\\hspace{1cm} \left (
	\bigoplus_{m= \frac{p+1}{2}}^{p-1} W_{(mr+j) \text{ mod } K,m}^{d\left(\pi^{-1}_{{i,2}}((mr+j) \text{ mod } K)\right)  } \right ).
	\label{eqn lemma 2 part 4 T2}
	\end{align}}
	The sub-file corresponding to $m=l$ in (\ref{eqn lemma 2 part 4 T2}) is $W_{(k \alpha+kz+r-1) \text{ mod }K}^{d(\pi_{i,2}^{-1}((k \alpha+kz+r-1) \text{ mod }K))}$.
	Since $\pi_{i,2}^{-1}((k \alpha+kz+r-1) \text{ mod }K) =\alpha$, we have $W_{(k \alpha+kz+r-1) \text{ mod }K}^{d(\pi_{i,2}^{-1}((k \alpha+kz+r-1) \text{ mod }K))} =W_{(k \alpha+kz+r-1) \text{ mod }K}^{d(\alpha )}$.
	Also, the $l^{th} $ block of the sub-file $W_{(k \alpha+kz+r-1) \text{ mod }K}^{d(\alpha )}$ is present in (\ref{eqn lemma 2 part 4 T2}).  For the same reason we had stated for the case discussed in {\textbf{ Part 1}}, the user $U_{\alpha}$ can retrieve the block  $W_{(k \alpha+kz+r-1) \text{ mod }K,l}^{d(\alpha )}$ from the coded symbol $T_{j,2}^{i}$ since, all other sub-files in (\ref{eqn lemma 2 part 4 T2}) are available at its cache.
	
	In short, the user $U_{\alpha} $ can decode the sub-file $W_{ (k \alpha+kz+r-1) \text{ mod }K}^{d(\alpha)}$, since it has retrieved all the blocks of the sub-file corresponding to $W_{ (k \alpha+kz+r-1) \text{ mod }K}^{d(\alpha)}$ This completes the proof for the set $\mathcal{Q}_{\alpha}$.
\end{proof}
From Lemmas \ref{lem1} and \ref{lem2}, when $K-k z$ is even, each user $U_{\alpha}$ can decode all the sub-files, $\{W_{(k \alpha+k z+i) \text{ mod }K}^{d(\alpha)}, \forall i \in [0,K-k z-1] \}$, corresponding to its demanded file $W^{d(\alpha)}$, which are not available in its cache.
\subsection{Proof of Correctness of Algorithm \ref{algo2} when $K-kz$ is odd.}
\label{proof algo2}
If $K-kz$ is odd, out of the $K-kz$ sub-files which the user $U_{\alpha}$ needs to retrieve, we denote the first $\frac{K-kz-1}{2}$ consecutive sub-files, neighboring to the sub-files available with the user $U_{\alpha}$, as the set $\mathcal{P}'_{\alpha}$, the next sub-file as the set $\mathcal{O}_{\alpha}$ and the remaining $\frac{K-kz-1}{2}$ sub-files as the set $\mathcal{Q}'_{\alpha}$, i.e.,
{\small
\begin{align}
\mathcal{P}'_{\alpha}=& \left \{W_{(k \alpha+kz+i) \text{ mod } K}^{d(\alpha)} : i \in \left [0,\frac{K-kz-1}{2}-1 \right ]\right \} \label{eq odd Palpha} \\
\mathcal{O}_{\alpha} =& \left \{W_{(k \alpha+kz+i) \text{ mod } K}^{d(\alpha)} : i = \frac{K-kz-1}{2} \right \} \label{eq odd Oalpha} \\
\mathcal{Q}'_{\alpha} =& \left \{W_{(k\alpha+kz+i) \text{ mod } K}^{d(\alpha)} : i \in \left [\frac{K-kz-1}{2}+1,K-k z-1\right ]\right \}.\label{eq odd Qalpha}
\end{align}}
Hence, the user $U_{\alpha}$ needs to decode  all the sub-files in the set $\mathcal{P}'_{\alpha} \cup \mathcal{O}_{\alpha} \cup \mathcal{Q}'_{\alpha}.$ 
We rewrite the set $\mathcal{P}'_{\alpha}$ as in (\ref{eq; p alpha odd}), by changing the variable in the subscript of each sub-file from $i$ to $r=K-kz-i$. This is done to relate these sub-files to the ones present in the coded symbols in Algorithm \ref{algo2}.
{\small
\begin{align} \label{eq; p alpha odd}
\mathcal{P}'_{\alpha} &= \left \{W_{(k \alpha+kz+K-kz-r) \text{ mod } K}^{d(\alpha)} : r \in \left [\frac{K-kz+1}{2}+1,K-kz \right ]\right \} \nonumber\\
&= \left \{W_{(k \alpha-r) \text{ mod } K}^{d(\alpha)} : r \in  \left [\frac{K-kz+1}{2}+1,K-kz \right ]\right \}.
\end{align}}
We rewrite the set $\mathcal{O}_{\alpha}$,
by changing the variable from $i$ to $r=K-kz-i$, as
{\small
\begin{align} \label{eq O alpha odd}
\mathcal{O}_{\alpha} &= \left \{W_{(k \alpha+kz+K-kz-r) \text{ mod } K}^{d(\alpha)} : r =\frac{K-kz+1}{2}\right \} \nonumber\\
&= \left \{W_{(k \alpha-r) \text{ mod } K}^{d(\alpha)} :r =\frac{K-kz+1}{2}\right \}.
\end{align}}
Similarly, we rewrite the set $\mathcal{Q}'_{\alpha}$,
by changing the variable from $i$ to $r=i+1$, as
{\small
\begin{align} \label{eq q alpha odd}
\mathcal{Q}'_{\alpha} =  \left \{W_{(k \alpha+kz+r-1) \text{ mod } K}^{d(\alpha)}: r \in \left [\frac{K-kz+1}{2}+1,K-k z\right ]\right \}.
\end{align}}

We prove that  the user $U_{\alpha}$ can decode the sub-files in $\mathcal{P}'_{\alpha}$, $\mathcal{Q}'_{\alpha}$  and $\mathcal{O}_{\alpha}$ separately in Lemmas \ref{lem3}, \ref{lem4} and \ref{lem5} respectively.

\begin{lem}
	\label{lem3}
	Each user $U_{\alpha}, \alpha \in [0 ,K-1]$, can decode all the sub-files in $\mathcal{P}'_{\alpha}$ given by (\ref{eq; p alpha odd}), using the transmissions in Algorithm \ref{algo2}.
\end{lem}
\begin{lem}
	\label{lem4}
	Each user $U_{\alpha}, \alpha \in [0,K-1]$, can decode all the sub-files in $\mathcal{Q}'_{\alpha}$ given by (\ref{eq q alpha odd}), using the transmissions in Algorithm \ref{algo2}.
\end{lem}
The proofs of  Lemmas \ref{lem3} and \ref{lem4} are similar to the proofs of Lemmas \ref{lem1} and \ref{lem2} respectively. 
\begin{lem}
	\label{lem5}
	Each user $U_{\alpha}, \alpha \in [0 ,K-1]$, can decode the sub-file in  $\mathcal{O}_{\alpha}$ given by (\ref{eq O alpha odd}), using the transmissions in Algorithm \ref{algo2}.
\end{lem}
\begin{proof}
	Each user $U_{\alpha}$ uses the coded symbols obtained during iteration $1$ to decode the sub-file $W_{(k \alpha-r)\text{ mod } K}^{d(\alpha)}$. 
	The user $U_{\alpha}$ needs to retrieve all the $p$ blocks, $\{W_{(k \alpha-r)\text{ mod } K,l}^{d(\alpha)}: l \in [0,p-1]\}$, of the sub-file $W_{(k \alpha-r)\text{ mod } K}^{d(\alpha)}$ in order to decode that sub-file (see line 5 in Algorithm \ref{algo2}). For each $l \in[0,p-1]$, the user $U_{\alpha}$ retrieves the block $W_{(k \alpha-r)\text{ mod } K,l}^{d(\alpha)}$, from the coded symbol $T_{j \text{ mod } K}^{1}$, where $j=k \alpha-(l+1)r$.             
	From line 7 in Algorithm \ref{algo2}, we have
	{\small
	\begin{align}
	T_{j \text{ mod } K}^{1} &= \bigoplus_{m \in [0,p-1] } W_{(k \alpha+(m-l-1)r)\text{ mod } K,m}^{d{(\pi_{1,1}^{-1}((k \alpha+(m-l-1)r)\text{ mod } K))}} \\
	&= W_{(k \alpha-r)\text{ mod } K,l}^{d{(\pi_{1,1}^{-1}((k \alpha-r)\text{ mod } K))}} \nonumber\\ &\hspace{1cm}\bigoplus_{m \in [0,p-1] \backslash l} W_{(k \alpha+(m-l-1)r)\text{ mod } K,m}^{d{(\pi_{1,1}^{-1}((k \alpha+(m-l-1)r)\text{ mod } K))}} \\
	&= W_{(k \alpha-r)\text{ mod } K,l}^{d(\alpha)} \bigoplus \nonumber\\&\hspace{1cm} \underbrace{  \left ( \bigoplus_{m \in [0,p-1] \backslash l} W_{(k \alpha+(m-l-1)r)\text{ mod } K,m}^{d{(\pi_{1,1}^{-1}((k \alpha+(m-l-1)r)\text{ mod } K))}} \right )}_{\text{available at cache}} . \label{eqn K odd}
	\end{align}}
	So, the user $U_{\alpha} $ can decode $W_{(k \alpha-r)\text{ mod } K,l}^{d(\alpha)},  $ for each $l \in[0,p-1]$, from $T_{(k \alpha-(l+1)r)\text{ mod } K}^{1}$, since all other blocks of sub-files are available at its cache (as shown in (\ref{eqn K odd})), the proof of which is similar to the one provided in {\textbf{Part 1} } in Lemma \ref{lem1}. Hence it can retrieve the sub-file $W_{(k \alpha-r)\text{ mod } K}^{d(\alpha)}$. This completes the proof for the set $\mathcal{O}_{\alpha}$.
\end{proof}
From Lemmas \ref{lem3}, \ref{lem4} and \ref{lem5}, when $K -k z$ is odd, each user $U_{\alpha}$ can decode all the sub-files, $\{W_{(k \alpha+k z+i)\text{ mod } K}^{d(\alpha)}, \forall i \in [0,K-k z-1] \}$, corresponding to its demanded file $W^{d(\alpha)}$, which are not available in its cache.
\end{document}